\newtheorem{example}{Example}
\newcommand{\set}[1]{\{#1\}}			% Set (as in \set{1,2,3})
\newcommand{\E}[1]{E\left[ #1 \right]}
\newcommand{\Em}[2]{\mathbf{E}_{#1} \left[ #2 \right]}
\newcommand{\Pm}{\mathrm{Pr}}         
\newcommand{\Pro}[1]{\Pm \left(#1\right)}
\newcommand{\abs}[1]{\left| #1 \right|}
\newcommand{\bigo}[1]{O\left( #1 \right)}
\newcommand{\bigom}[1]{\Omega \left( #1 \right)}
\newcommand{\bigolog}[1]{\tilde{O} \left( #1 \right)}
\newcommand{\bigomlog}[1]{\tilde{\Omega}\left(#1\right) }
\newtheorem{theorem}{Theorem}
\newtheorem{lemma}[theorem]{Lemma}
\newcommand{\sD}{\mathcal{D}}
\theoremstyle{definition}
\newtheorem{definition}{Definition}[section]
\newcommand{\eye}{\mathbb{I}}
\newcommand{\kibitz}[2]{\ifnum\Comments=1{\color{#1}{#2}}\fi}
\newcommand{\dm}[1]{\kibitz{blue}{[Deb: #1]}}
\newcommand{\dcp}[1]{\kibitz{red}{[Dcp: #1]}}
\newcommand{\todo}[1]{\kibitz{cyan}{[TODO: #1]}}
\begin{document}
% The file aaai.sty is the style file for AAAI Press 
% proceedings, working notes, and technical reports.
%
\title{Peer Prediction with Heterogeneous Tasks}
\newcommand\Mark[1]{\textsuperscript#1}
\author{Debmalya Mandal\Mark{1}, Matthew Leifer\Mark{1}, David C. Parkes\Mark{1}, Galen Pickard\Mark{2} and Victor Shnayder\Mark{1} \\
	\begin{tabular}{cc}
		\Mark{1} Harvard SEAS & \Mark{2} Google \\
		Cambridge, MA 02138 & New York, NY\\
	\end{tabular}\\
	\texttt{\{dmandal@g, matthewleifer@college, parkes@eecs\}.harvard.edu},\\
	\texttt{gpickard@google.com, shnayder@gmail.com}
}
\maketitle
\begin{abstract}
Peer prediction promotes contributions
of useful information by users in settings 
in which there is no way to verify the
quality of responses. This paper introduces
the problem of peer prediction with heterogeneous tasks,
where each task 
is associated with a different distribution on responses.
The  motivation comes from 
eliciting
user-generated content
about places in a city,
where tasks vary because places
and questions about places vary. 
We extend the {\em correlated agreement (CA)}
%multi-task peer prediction
mechanism~(\cite{shnayder2016informed})
to this setting,
aligning incentives for investing effort
without creating opportunities for coordinated
manipulations. We demonstrate in simulation much better incentive
properties than other mechanisms, 
using 
data from 
%from 
user reports on a crowdsourcing platform.
\end{abstract}

\section{Introduction}

Peer prediction refers to the problem of scoring information reports
in settings where the correctness of a report cannot be verified,
either because there is no objectively correct answer or because this
answer is too costly to acquire. This problem arises in diverse
contexts; e.g., peer assessment of assignments in massive open online
courses, and when collecting feedback about a new restaurant.  Peer
prediction algorithms use reports from multiple participants to score
contributions.

Simple approaches compare 
the responses of two users and award them if they agree. But this does
not promote truthful reporting when one user believes that it is
unlikely that another user will have the same opinion.  This problem
can be alleviated by adjusting scores according to the frequency of
reports~\cite{jurca2008truthful,witkowski2012robust,kamble2015truth}.

A limitation of current approaches, however, is that tasks are assumed to be
{\em ex ante} identical, with each task associated with the same 
distribution on reports.
But tasks on various maps platforms, %such as {\em Google Local Guides}, which seek
which seek to elicit content from users about places in a city,
are quite heterogeneous.  On this kind of platform, a user is
encouraged to answer several 
different types of questions (= tasks) related to the same
place; e.g., ``is the restaurant noisy?,'' ``is it accessible
by wheelchair?,'' or ``does it serve wine?'' The questions
are related to the same place,
yet  the prior beliefs
about the distribution on reports for each type of
question may be  very different.
%This motivates us to address the problem of designing a peer
%prediction mechanism with heterogeneous tasks.
%

We design a new, multi-task peer prediction mechanism (the {\em
	correlated agreement-heterogeneous mechanism}) that is responsive to
this challenge. This new mechanism shares similar properties with the
earlier {\em correlated agreement} (CA)
mechanism~\cite{shnayder2016informed}.  In particular, it is {\em
	informed truthful} under weak conditions, meaning that it is
strictly beneficial for a user to invest effort and acquire
information, and that truthful reporting is the best strategy when
investing effort, as well as an equilibrium. We demonstrate that the
mechanism has good incentive properties when tested in simulation on
distributions derived from user reports on a popular maps platform.\footnote{Name of platform removed to respect double-blind submission policy. Summary
statistics, that define distributions on pairs of signal reports
and are used for simulations, will be made available.}
%
% are used for the simulations,   %Google Local Guides.

\subsection{Related Work}

We focus in this brief discussion on mechanisms that are {\em minimal}, in the sense that
they only require signal (or information) reports and do not require
belief reports.  Miller et al.~\citeyear{MRZ2005} introduced the peer
prediction problem and proposed a minimal mechanism that has truthful
reporting in an equilibrium, however the 
mechanism's design requires 
knowledge of the joint 
signal distribution and is 
vulnerable to coordinated misreports.
In response, Jurca and Faltings~\shortcite{jurca2009mechanisms} show
how to eliminate uninformative, pure-strategy equilibria through a three-peer mechanism, and Kong et al.~\shortcite{kong2016putting} provide a method to design robust, single-task,
binary signal mechanisms.
% (but need knowledge of the joint signal distribution).

Witkowski and Parkes~\shortcite{witkowski2012robust} first
introduced the combination of learning and peer prediction, coupling the estimation of
the signal prior together with the shadowing mechanism.
There has also been work on making use of reports from a large
population and coupling scoring with estimation.  For a setting with
latent ground truth model, Kamble et al.~\shortcite{kamble2015truth}
provide mechanisms that guarantee strict incentive compatibility with
a large number of agents.  
%\dcp{not sure next comment is right. 
%can you confirm?} 
Radanovic et al.~\shortcite{radanovic2016incentives} provide a mechanism in which
truthfulness is the highest-paying equilibrium in the asymptote of a
large population and with a self-predicting condition that places a
structure on the correlation structure.

Dasgupta and Ghosh~\shortcite{dasgupta2013crowdsourced} show that
robustness to coordinated misreports can be achieved by using reports
across multiple tasks along with access to partial information about
the joint distribution.  The main insight in the DG mechanism is to
reward agents if they provide the same signal on the same task, but
punish them if one agent's report on one task is the same as another's
on another task.  Shnayder et al.~\shortcite{shnayder2016informed}
generalize DG to handle multiple signals, and show how the required
knowledge about the distribution (the correlation structure on pairs
of signals) can be estimated from reports without compromising
incentives.  Their correlated agreement (CA) mechanism rewards pairs
of reports on the same task (penalizes pairs of reports on different
tasks) based on whether signals are positively or negatively
correlated. On the other hand, ~\cite{agarwal2017} generalize the CA
mechanism when users are heterogeneous and derive sample complexity bounds for
learning the reward matrices.
Shnayder et al.~\shortcite{shnayder-ijcai16} adopt replicator dynamics
as a model of population learning in peer prediction, and confirm that
these multi-task mechanisms (including Kamble et
al.~\shortcite{kamble2015truth}) are successful at avoiding uninformed
equilibria. 
%\dcp{i guess we need to briefly cite our ec 17 paper as well!
%also, should the `swiss noise' paper be cited? generally can probably
%shorten some discussion in this section if you meed the space.}

To the best of our knowledge, there is no prior work on extending the
design of these multiple-task mechanisms to heterogeneous tasks, where
pairs of reports may be on different types of tasks, with each task
associated with a different signal distribution.
%  This is the focus of
%the present paper.

\section{Heterogeneous, Multi-Task Peer Prediction}

Consider two agents, $1$ and $2$, who are members of a large population.
Each agent is assigned to a set of $M = \set{1,2,\ldots,m}$ tasks.
%Similar to \cite{shnayder2016informed} we
We adopt a binary effort model: if an agent invests effort he incurs a
cost and obtains an informed {\em signal}, otherwise the agent
receives no signal.  There are $n$ signals.  We do not assume that
tasks are {\em ex ante} identical, however, we do assume that the
signals for different tasks are drawn independently.  

Let $S^1_k$ and $S^2_k$ respectively be the signals of agents $1$ and
$2$ for task $k$ (if investing effort). Let $P_k(i,j) = \Pro{S^1_k =
  i, S^2_k = j}$ be the joint probability for a pair of signals
$(i,j)$ on task $k$ and let $P_k(i)$ and $P_k(j)$ be the corresponding
marginal probabilities.  We assume that the agents are exchangeable in
their roles in these distributions, with the same marginal
distributions and joint distributions for any pair of agents.

An agent's strategy maps every task and every received signal to a
reported signal. Agents make reports without knowledge of each others'
reports. We assume that the type of task, and signal about a task
(upon investing effort), is the only information available to an
agent.  For the theoretical analysis, we assume that an agent adopts
the same strategy across all tasks. We leave the analysis of asymmetric strategies for future work.
%Section~\ref{sec:asymmetric} shows
%that this restriction incurs only a small loss in agents' utilities
%when the number of tasks $m$ is large.
\footnote{This is without loss
  of generality in the homogeneous task setting
  of~\cite{shnayder2016informed}, but need not be in the present
  context.}
We allow an agent's strategy to be randomized, i.e. a probability
distribution over the set of possible signals. We will write $F$ and
$G$ to denote the mixed strategies of agents $1$ and $2$ respectively.
Let $\eye$ denote the truthful strategy i.e. $\eye(j) = j$. As in~
Shnayder et al.~\shortcite{shnayder2016informed}, we are interested in
the following two incentive properties:
\begin{definition}(Strong Truthful) A peer prediction mechanism is
  {\em strong truthful} iff for all strategies $F,G$ we have
  $E(\eye,\eye) \geq E(F,G)$, where equality may hold only when $F$
  and $G$ are both the same permutation strategy (i.e. a bijection
  from received signals to reported signals.)
\end{definition}
\begin{definition}(Informed Truthful) A peer prediction mechanism is
  {\em informed truthful} iff for all strategies $F, G$ we have
  $E(\eye,\eye) \geq E(F,G)$, where equality may hold only when $F$
  and $G$ are informed strategies (i.e. reports depend on an agent's
  signal).
\end{definition}
\if 0
\dcp{i'm wondering whether, anticipating the simulation analysis we do
  later, we want to explain why these properties imply robustness in a
  larger population when some fraction play one way and some fraction
  play another. I think they do, but don't have a clear recollection
of having explained this in a previous paper. please add
a comment to this effect if you've
seem this or agree it's true.}
\fi 
These two truthfulness properties imply that truthful reporting is a
strict and weak correlated equilibrium,
respectively~\cite{shnayder2016informed}. They also ensure that there
are no useful, coordinated misreports available to agents.

\subsection{Delta Matrices}

Following~Shnayder et al.~\shortcite{shnayder2016informed} to multiple
types of tasks, a first approach would be to define the following
$n\times n$ matrix for task $k$:
\begin{equation}\label{eq:original-delta}
\Delta_k(i,j) = P_k(i,j) - P_k(i)P_k(j).
\end{equation}

Let $S_k$ be the {\em sign matrix} of $\Delta_k$ i.e. $S_k(i,j) = 1$ if
$\Delta_k(i,j) > 0$ and $S_k(i,j) = 0$ otherwise.  

In the original CA mechanism~\cite{shnayder2016informed}, each task
$k$ is {\em ex ante} identical, and thus has the same delta matrix.
Denote this matrix $\Delta$, with $S$ the corresponding sign matrix.
%
%Suppose two agents 
%provide reports on each of multiple tasks
%of kind $k$. Then 
The original
CA mechanism works as follows:
\begin{enumerate}
	\item Let $r^1_k$ ($r^2_k$) be the signal reported by agent $1$ ($2$) on task $k$.
	\item Pick a task $b$ uniformly at random
	as the {\em bonus task}, and
	pick {\em penalty tasks} $l'$ and $l''$ (with $l'\neq l''$)
	uniformly at random from the remaining tasks.
	\item Pay each agent $S(r^1_b, r^2_b) - S(r^1_{l'}, r^2_{l''})$. 
\end{enumerate}

A simple generalization is to pay $S_b(r^1_b, r^2_b) - S_b(r^1_{l'},
r^2_{l''})$, where $S_b$ is the sign matrix corresponding to the bonus
task. But this is not informed truthful for heterogeneous tasks. This
is demonstrated in Example~\ref{ex:dp1}.
\begin{example}[CA is not informed truthful with heterogeneous tasks]
\label{ex:dp1}
	Consider three tasks (1, 2 and 3) 
	with the following joint probability distributions 
	\begin{equation*}
	\begin{array}{cc@{}c@{}c}
	& \begin{array}{cc} Y & N \end{array}
	&	\begin{array}{cc} Y & N \end{array}
	& \begin{array}{cc} Y & N \end{array}\\
	\begin{array}{c}
	Y \\ N
	\end{array}
	&	 \left[\begin{array}{cc}
	0.4 & 0.22 \\
	0.22 & 0.16 \\
	\end{array}\right]
	&	\left[\begin{array}{cc}
	0.7 & 0.14 \\
	0.14 & 0.02 \\
	\end{array}\right]
	& \left[\begin{array}{cc}
	0.4 & 0.22 \\
	0.22 & 0.16 \\
	\end{array}\right] \\
	& (P_1) & (P_2) & (P_3)
	\end{array}
	\end{equation*}
	
	and the following sign matrices:
	\begin{equation*}
	\mathit{sign}(\Delta_1) : \begin{bmatrix}
	1 & 0 \\ 0 & 1
	\end{bmatrix}
	\ \mathit{sign}(\Delta_2) : \begin{bmatrix}
	0 & 1 \\ 1 & 0
	\end{bmatrix}
	\ \mathit{sign}(\Delta_3) : \begin{bmatrix}
	1 & 0 \\ 0 & 1
	\end{bmatrix}
	\end{equation*}

	Suppose each agent adopts the truthful strategy, and
	task 1 is the bonus task, and 
	2 and 3 are the penalty tasks
	for agents 1 and 2,  respectively. Then the
	expected score is 
	\begin{align*}
	&\sum_{i,j} P_1(i,j) S_1(i,j) - P_2(i) P_3(j) S_1(i,j),%\\ & = \sum_{i,j : \Delta_1(i,j) > 0} P_1(i,j) - P_2(i) P_3(j) \\
	%&= P_1(Y,Y) - P_2(Y)P_3(Y) + P_1(N,N) - P_2(N) P_3(N) \\ &= 0.4 - 0.84 * 0.62 + 0.16 - 0.16 * 0.38 = -0.0216
	\end{align*}
which evaluates to
%	An easy calculation shows that the expected score in this case is
	$-0.0216$.  This is true irrespective of whether the penalty tasks for
	1 and 2, respectively, are 2 and 3 or 3 and 2.  Similarly, we can show
	that the expected scores are $-0.1912$ and $-0.0216$ when the bonus
	task is task $2$ and $3$, respectively.
	
	Now consider the case when the first agent always 
	reports $N$. Suppose task $1$ is the bonus task and 
	tasks $2$ and $3$ are the penalty tasks for 1 and 2, respectively.
	The expected score 
	is
	\begin{align*}
	&\sum_{i,j} P_1(i,j) S_1(N,j) - P_2(i)P_3(j) S_1(N,j), %\\ &= \sum_j S_1(N,j) \left\{ \sum_i P_1(i,j) - P_3(j) \sum_i P_2(i)\right\}\\
	%&= \sum_j S_1(N,j) \left\{ P_1(j) - P_3(j) \right\} = P_1(N) - P_3(N) = 0 
	\end{align*}
which evaluates to $0$. Similarly, for task 3 and 2 as
	the penalty for 1 and 2, respectively, the expected score is $0.22$.
	So on average, the expected score for task 1 as bonus is $0.11$.
	Similar calculations show expected scores of 0.22 and 0.11, for tasks
	2 and 3 as bonus, respectively. Thus, the CA mechanism fails to be
	informed truthful for this example.% tasks are heterogeneous.
\end{example}

\section{The Correlated-Agreement Heterogeneous (CAH) Mechanism}

In this section, we extend the CA mechanism to handle heterogeneous
tasks. The main idea is to modify the delta matrix for a bonus
task to allow for the implied product distribution on 
signals on penalty tasks.~\Cref{algo:cah} describes the CAH mechanism.
% for rewarding responses to a 
%set of $m$ tasks. 
%The mechanism makes a payment for each of the $m$ tasks, with each
%task taken in turn as the bonus task. One can also simply choose one
%of the tasks at random as the bonus task.
%
\begin{algorithm}
	\caption{CAH mechanism\label{algo:cah}}
	\begin{algorithmic}[1]
		%	\Require Joint Probability Distributions $\{P_k(\cdot,\cdot)\}_{k=1}^m$ and reports $\{r^1_k,r^2_k\}_{k=1}^m$
		\Require Joint probability distribution $P_b(\cdot,\cdot)$, marginal probability distributions $\{P_l(\cdot)\}_{l\neq b}$ and reports $\{r^1_k,r^2_k\}_{k=1}^m$
		%		\For{$b = 1,\ldots,m$}
		\State $b \leftarrow$ uniformly at random from $\{1,\ldots,m\}$ (bonus task)
		%		\Comment{Reward bonus task $b$}
		\State $l' \leftarrow$ uniformly at random from $ \{1,\ldots,m\} \setminus \{b\}$ (penalty task assigned to agent 1)
		\State $l'' \leftarrow$ uniformly at random from $ \{1,\ldots,m\} \setminus \{b,l'\}$ (penalty task assigned to agent 2)
		\State Define $\Delta_b(i,j)$ as \begin{equation}\label{eq:defn-delta}
		\!\!\!\!P_b(i,j) - \frac{1}{(m-1)(m-2)}\!\sum_{\stackrel{t',t'' \in [m] \setminus \set{b}}{\& \ t' \neq t''} } \!\!\!\!P_{t'}(i)P_{t''}(j) 
		\end{equation} 
		\State Let $S_b(i,j)$ be the corresponding score matrix i.e. 
		\[S_b(i,j) = 1 \text{ if } \Delta_b(i,j) > 0 \text{ and } S_b(i,j) = 0 \ \text{otherwise}\]
		\State Make payment $S_b(r^1_b,r^2_b) - S_b(r^1_{l'}, r^2_{l''})$
		to each agent
		%		\EndFor
	\end{algorithmic}
\end{algorithm}

%\subsection{Analysis of the CAH Mechanism}

%We assume that the agents adopt the same strategy across
%all the tasks. 

In analyzing the properties of CAH, we note that it is sufficient to
consider only deterministic strategies. The proof of this statement is
analogous to Lemma 3.2~\cite{shnayder2016informed}, and uses the fact
that the maximization of a linear function over a convex region is
extremal.
%Now we compute the expected payment on a bonus task $b$ for
%deterministic strategies $F$ and $G$. We will

Given this, let $F_i$ ($G_j$) denote the report of agents 1 (2) on
signal $i$ ($j$).
% Let $l'$ and $l''$ 
%denote the penalty tasks assigned to agents $1$ and $2$ respectively, for a given bonus task $b$.
%
The expected score for strategies $F$ and $G$,
conditioned on some 
bonus task $b$, denoted as $E_b(F,G)$, is:
\begin{align}
& \Em{l',l''}{\sum_{i,j} P_b(i,j) S_b(F_i,G_j) - \sum_{i,j} P_{l'}(i) P_{l''}(j) } \notag \\
&=  \sum_{i,j} P_b(i,j) S_b(F_i,G_j)  \notag \\
&- \sum_{\stackrel{l',l'' \in [m] \setminus \set{b}}{\& \ l' \neq l''}} \frac{1}{(m-1)(m-2)} \sum_{i,j} P_{l'}(i)P_{l''}(j)S_b(F_i,G_j)\notag\\
%-\frac{1}{(k-1)(k-2)} \sum_{i=1}^n \sum_{j=1}^n P_{l''}(i)P_{l'}(j)S_b(F_i,G_j)\right] \\
&=  \sum_{i,j} \Delta_b(i,j) S_b(F_i,G_j), \label{eq:dcp1}
\end{align}

where $\ell$ and $\ell''$ denote agent 1 and agent 2's penalty tasks,
respectively.
%
%\dcp{deb--- suggest just working with one $b$ at random, and thus adding
%$1/m$ here}\dm{done, but suggest using $\sum_b E_b$ if we decide to add the analysis for the heterogeneous strategies across tasks}
%
Thus, the expected score, averaged over the $m$
possible bonus tasks, is 
\begin{align} \label{CA-score}
E(F,G) = \frac{1}{m} \sum_{b=1}^m E_b(F,G) = \frac{1}{m} \sum_{b=1}^m \sum_{i,j} \Delta_b(i,j) S_b(F_i,G_j)
\end{align}

We now state a property about the 
delta matrices~\eqref{eq:defn-delta}.
%\footnote{\dcp{needs to point to appendix instead}
%Due to limited space, some proofs are omitted from the current paper.
%	They will be provided in the longer version of the paper.}
%
\begin{lemma}\label{lem:delta-zero}
	For each task $b$, we have $\sum_{i,j} \Delta_b(i,j) = 0$
\end{lemma}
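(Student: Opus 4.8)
The plan is to split the double sum $\sum_{i,j}\Delta_b(i,j)$ into its two constituent pieces according to the definition in~\eqref{eq:defn-delta}, evaluate each piece separately, and observe that they cancel. Both pieces reduce to elementary facts about probability distributions summing to one, together with a counting argument for the normalization constant.

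First I would handle the first term: since $P_b(\cdot,\cdot)$ is a joint probability distribution over pairs of signals, $\sum_{i,j} P_b(i,j) = 1$ directly. Next I would handle the second term, $\frac{1}{(m-1)(m-2)}\sum_{i,j}\sum_{\substack{t',t'' \in [m]\setminus\{b\}\\ t'\neq t''}} P_{t'}(i)P_{t''}(j)$. Interchanging the order of summation and factoring the inner product gives $\frac{1}{(m-1)(m-2)}\sum_{\substack{t'\neq t''}}\bigl(\sum_i P_{t'}(i)\bigr)\bigl(\sum_j P_{t''}(j)\bigr)$. Each marginal sums to one, so the summand is identically $1$, and the sum equals the number of ordered pairs $(t',t'')$ of distinct elements drawn from the $(m-1)$-element set $[m]\setminus\{b\}$, which is exactly $(m-1)(m-2)$. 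Hence the second term equals $\frac{(m-1)(m-2)}{(m-1)(m-2)} = 1$.

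Combining, $\sum_{i,j}\Delta_b(i,j) = 1 - 1 = 0$, as claimed. The only thing to be careful about is that the normalization factor $\frac{1}{(m-1)(m-2)}$ in~\eqref{eq:defn-delta} is precisely the reciprocal of the count of ordered distinct pairs $(t',t'')$ over which the sum ranges; this is the one place where a mismatch could creep in, so I would state the count explicitly rather than leave it implicit. Otherwise the argument is a routine interchange-of-summation computation with no real obstacle.
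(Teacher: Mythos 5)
Your proof is correct and follows essentially the same route as the paper's: both arguments reduce the claim to the fact that $\sum_{i,j}P_b(i,j)=1$ and that the averaged product-of-marginals term also sums to $1$, using that the normalization $\frac{1}{(m-1)(m-2)}$ matches the number of ordered distinct pairs $(t',t'')$ in $[m]\setminus\{b\}$. The only cosmetic difference is that the paper marginalizes over $j$ first and then over $i$, whereas you interchange and factor the sums directly; the computations are equivalent.
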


%\dcp{Give a 1-2 sentence sketch?}
\begin{proof}
	See Appendix \ref{app:defn-zero}
\end{proof}

\subsection{Informed Truthfulness}

The CAH mechanism is informed truthful under a  weak condition on
the signal distributions.
%
%
%\noindent{\bfseries Condition 2} : Each entry of $\Delta_b$ is non-zero.
\begin{theorem}\label{thm:informed-ca3}
	If for each task $b$, $\Delta_b$ is symmetric and each entry of $\Delta_b$ is non-zero, then the CAH mechanism is informed truthful. 
\end{theorem}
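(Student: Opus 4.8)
The plan is to reduce everything to a single bonus task and then average, following the structure of the analogous argument in~\cite{shnayder2016informed}. By the reduction to deterministic strategies noted above, fix deterministic strategies with report maps $i\mapsto F_i$ and $j\mapsto G_j$, and recall from~\eqref{CA-score} that $E(F,G)=\frac1m\sum_{b=1}^m E_b(F,G)$ with $E_b(F,G)=\sum_{i,j}\Delta_b(i,j)\,S_b(F_i,G_j)$. The core estimate is the term-by-term inequality $\Delta_b(i,j)\,S_b(F_i,G_j)\le \Delta_b(i,j)\,S_b(i,j)=\max(\Delta_b(i,j),0)$, which holds because $S_b(i,j)=\mathbf{1}[\Delta_b(i,j)>0]$ and $S_b(F_i,G_j)\in\{0,1\}$ (check the cases $\Delta_b(i,j)>0$ and $\Delta_b(i,j)<0$ separately). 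Summing over $i,j$ gives $E_b(F,G)\le E_b(\mathbb{I},\mathbb{I})$ for every $b$, and averaging gives $E(F,G)\le E(\mathbb{I},\mathbb{I})$. Note this first half uses neither symmetry of $\Delta_b$ nor the non-degeneracy hypothesis.

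Next I would characterize equality. Since $E(\mathbb{I},\mathbb{I})-E(F,G)$ is a sum of the non-negative quantities $\max(\Delta_b(i,j),0)-\Delta_b(i,j)\,S_b(F_i,G_j)$, equality holds iff each of these vanishes. Using that every entry of $\Delta_b$ is non-zero, the $b$-th block of vanishing conditions is exactly $S_b(F_i,G_j)=S_b(i,j)$ for all $i,j$: the strategies must reproduce the sign pattern of each $\Delta_b$ verbatim. Here the non-zero hypothesis is essential, since a zero entry of $\Delta_b$ would impose no constraint on $S_b(F_i,G_j)$ and break this step.

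It then remains to show that such strategies are informed, i.e. non-constant. Suppose instead $F$ is constant with $F_i\equiv c$. Then for every $b$ and $j$ the value $S_b(i,j)=S_b(c,G_j)$ does not depend on $i$, so every column of $S_b$ is constant; and since $\Delta_b$ is symmetric, so is $S_b$, which forces $S_b$ to be a constant matrix. But Lemma~\ref{lem:delta-zero} gives $\sum_{i,j}\Delta_b(i,j)=0$, and as all entries are non-zero this means $\Delta_b$ has entries of both signs, so $S_b$ takes both values $0$ and $1$ --- a contradiction. Hence $F$, and by the symmetric argument $G$, must be non-constant, i.e. informed. This yields $E(\mathbb{I},\mathbb{I})\ge E(F,G)$ with equality only for informed $F,G$, which is informed truthfulness.

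I expect the last paragraph to be the main obstacle: the inequality and the equality characterization are essentially bookkeeping, but deducing that the equality case precludes uninformed strategies is precisely where both hypotheses do real work --- symmetry upgrades ``every column of $S_b$ is constant'' to ``$S_b$ is constant'', and Lemma~\ref{lem:delta-zero} together with non-degeneracy rules the latter out. One secondary subtlety worth a sentence is extending the equality statement from deterministic to mixed strategies, which should follow from the same linearity and extreme-point reasoning that justified the reduction to deterministic strategies in the first place.
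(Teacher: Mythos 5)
Your proof is correct and follows essentially the same route as the paper's: the first inequality is the identical term-by-term bound, and your final contradiction uses exactly the same three ingredients (symmetry of $\Delta_b$, Lemma~\ref{lem:delta-zero}, and non-zero entries), merely applied in a different order --- you derive ``all columns of $S_b$ constant $\Rightarrow$ $S_b$ constant by symmetry $\Rightarrow$ contradicts the zero-sum lemma,'' whereas the paper derives ``all columns constant-sign $\Rightarrow$ an all-positive and an all-negative column exist by the zero-sum lemma $\Rightarrow$ contradicts symmetry.'' The only organizational difference is that you characterize the equality case ($S_b(F_i,G_j)=S_b(i,j)$ for all $i,j,b$) and then exclude constant $F$, while the paper directly bounds the uninformed score by $\sum_j\max(0,\sum_i\Delta_b(i,j))$ and shows this is strictly smaller; the mixed-strategy subtlety you flag at the end is handled by the paper at the same level of detail as yours (a pointer to the extremality reduction), so you are not missing anything relative to it.
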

\begin{proof}
	For any bonus task $b$, the truthful strategy $(\eye,\eye)$ has higher expected score than any other pair of strategies $F,G$:
	\begin{align*}
	&E_b(\eye,\eye) =  \sum_{i,j} \Delta_b(i,j) S_b(i,j) =  \sum_{i}\sum_j \max(0,\Delta_b(i,j))
	\\& \geq  \sum_{i,j} \Delta_b(i,j) S_b(F_i,G_j) = E(F,G).
	\end{align*}
	
	Consider an uninformed strategy $F$, with $F_i = r$ for all $i$. Then for any $G$,
	the expected score is
	\begin{align*}
	&\sum_{i=1}^n \sum_{j=1}^n \Delta_b(i,j) S_b(r,G_j) =  \sum_{j=1}^n S_b(r,G_j) \sum_{i=1}^n \Delta_b(i,j) \\ &\leq  \sum_{j=1}^n \max(0, \sum_{i=1}^n \Delta_b(i,j)).
	\end{align*}
	
	We need to show the following:
	\begin{align*}
	& \sum_{j=1}^n \max(0, \sum_{i=1}^n \Delta_b(i,j)) < 
	\sum_{j=1}^n \sum_{i=1}^n \max(0,\Delta_b(i,j)).
	\end{align*}
	
	It is enough to show that for each $b$, there exists a column $j$ and two different rows $i_1,i_2$ such that $\Delta_b(i_1,j) > 0$ and $\Delta_b(i_2,j) < 0$.
	Suppose not. Then each column of $\Delta_b$ has either all
        positive entries or all negative entries. Since each entry of
        $\Delta_b$ is non-zero and Lemma~\ref{lem:delta-zero} holds,
        there exist two columns $j_1$ and $j_2$ such that all entries
        of $j_1$ ($j_2$) are positive (negative). This implies
        $\Delta_b(j_2,j_1) > 0$ and $\Delta_b(j_1,j_2) < 0$, which
        contradicts the fact that $\Delta_b$ is symmetric.
\end{proof}

\subsection{Strong Truthfulness}

%\dcp{suggest switching this section with the next, just so the reviewer sees
%	a proof before s/he sees another proof for which there isn't space.}

We state a sufficient condition for the CAH mechanism to satisfy the property
of strong truthfulness.

{\bfseries Condition 1} :
\begin{enumerate}
	\item  $\Delta_b(i,i) > 0$, $\quad \forall b\ \forall i$.
	\item $\sum_{b=1}^m \Delta_b(i,j) < 0$, $\quad \forall i \neq j$.
\end{enumerate}
\begin{theorem}\label{thm:strong-cah}
	If $\{\Delta_b\}_{b=1}^m$ satisfy  Condition 1, 
	then the CAH mechanism is strongly truthful.
\end{theorem}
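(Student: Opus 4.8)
The plan is to follow the skeleton of the proof of Theorem~\ref{thm:informed-ca3}, but to carry the equality analysis one step further. As already noted, it suffices to consider deterministic strategies $F,G$, described by the reports $F_i$ and $G_j$. Using the closed form~\eqref{CA-score}, for every bonus task $b$ and every cell $(i,j)$ we have $\Delta_b(i,j)\,S_b(F_i,G_j)\le\max(0,\Delta_b(i,j))$ because $S_b(\cdot,\cdot)\in\{0,1\}$. Summing over $i,j$ and over $b$ gives
\begin{align*}
E(F,G) &= \frac1m\sum_{b=1}^m\sum_{i,j}\Delta_b(i,j)\,S_b(F_i,G_j) \\
&\le \frac1m\sum_{b=1}^m\sum_{i,j}\max(0,\Delta_b(i,j)) = E(\eye,\eye),
\end{align*}
which establishes the inequality half of strong truthfulness.

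For the equality case, $E(F,G)=E(\eye,\eye)$ forces every summand to be tight: for all $b$ and all $(i,j)$, (a) if $\Delta_b(i,j)>0$ then $\Delta_b(F_i,G_j)>0$, and (b) if $\Delta_b(i,j)<0$ then $\Delta_b(F_i,G_j)\le 0$ (cells with $\Delta_b(i,j)=0$ impose no constraint, so allowing zero entries costs nothing here). It remains to show that (a) and (b), together with Condition~1, force $F=G$ to be a common permutation strategy.

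First, fix a signal $i$. By part~1 of Condition~1, $\Delta_b(i,i)>0$ for every $b$, so (a) applied to the diagonal cell $(i,i)$ yields $\Delta_b(F_i,G_i)>0$ for every $b$, hence $\sum_{b}\Delta_b(F_i,G_i)>0$; if $F_i\ne G_i$ this contradicts part~2 of Condition~1. Therefore $F_i=G_i$ for all $i$, i.e.\ $F=G$. Second, suppose $F$ were not injective, say $F_{i_1}=F_{i_2}=:c$ with $i_1\ne i_2$. By part~2 of Condition~1, $\sum_b\Delta_b(i_1,i_2)<0$, so some bonus task $b$ has $\Delta_b(i_1,i_2)<0$; then (b) applied to the cell $(i_1,i_2)$ gives $\Delta_b(F_{i_1},G_{i_2})\le 0$, and since $F_{i_1}=G_{i_2}=c$ this says $\Delta_b(c,c)\le 0$, contradicting part~1 of Condition~1. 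Hence $F$ is an injection, thus a bijection, on the $n$ signals, so $F=G$ is the same permutation strategy, which is exactly what strong truthfulness requires.

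I expect the only real subtlety to be the equality analysis: one must not stop at the inequality $E(F,G)\le E(\eye,\eye)$ but extract the cell-by-cell tightness conditions (a)--(b), and then notice that part~1 of Condition~1 should be used cell-by-cell on the diagonal while part~2 should be invoked only after summing over bonus tasks. Handling the zero entries of $\Delta_b$ (permitted here, unlike in Theorem~\ref{thm:informed-ca3}) and pinning \emph{both} agents to the \emph{same} permutation also require a little care, but both fall out of the two-step argument above.
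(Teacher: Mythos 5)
Your proof is correct and follows essentially the same route as the paper's: both arguments bound the score cell-by-cell by $\max(0,\Delta_b(i,j))$, use Condition~1(1) together with Condition~1(2) on a diagonal cell $(i,i)$ to rule out $F_i\neq G_i$, and use them on an off-diagonal cell to rule out a non-injective $F=G$. The only difference is presentational — you phrase it as an equality/tightness analysis while the paper exhibits a strict loss directly in each of the two cases — and your version arguably states the needed implications more cleanly.
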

\begin{proof}
See Appendix \ref{app:strong-cah}.
\end{proof}

%\dm{Open question : is condition 1 necessary for strong truthfulness?}

%\subsection{Categorical $\Delta_b$}

Condition 1 is slightly weaker than the {\em categorical}
condition~\cite{shnayder2016informed}.
$\Delta_b$ is categorical if
(1) $\Delta_b(i,i) > 0$ for all signals $i$, and (2) $\Delta_b(i,j) <
0$ whenever $i\neq j$; i.e., same-signal positive
correlation and other-signal negative correlation.  
Condition 1 does not require every
off-diagonal entry to be negative for all tasks $b$, but
only that the average of
the off-diagonal entries is negative.
%\dcp{deb-- if we need the space, suggest to drop next thm and argument and just include
%a sentence to state the observation}
Categorical and Condition~1 are equivalent when there are only two signals.
%
%\begin{theorem}\label{thm:cat-cond1-cah}
%Condition 1 and the categorical condition are equivalent when there are only two signals.
%\end{theorem}
%
%\begin{proof}
%	
%\todo{See Appendix \ref{app:cat-cond1-cah}.}
%
\if 0
Both the conditions require 
\begin{align}\label{eq:cat}
P_b(Y,Y) > P_{l'}(Y) P_{l''}(Y) \ \text{ and } \ P_b(N,N) > P_{l'}(N)P_{l''}(N)
\end{align}

\dcp{need more here. not clear what the defn of categorical is and why 
	categorical is equiv to condition 1}
\fi 

\subsection{Combining CAH with Estimation}

As with the CA mechanism~\cite{shnayder2016informed}, the CAH
mechanism remains (approximately) informed truthful even when the
statistics used to determine scores are estimated from the reports of
strategic agents.  The reason is that the score matrix that
corresponds to the correct statistics is the best possible score
matrix for agents, and thus they cannot do better by cooperating in
designing an alternate matrix. 
% \dcp{this is what you start calling
 % CAHR or empirical CAH later in the paper. decide what to call it and
 % be consistent.} 

(Algorithm~\ref{algo:learning-cah}) presents the detail-free version of CAH mechanism, which learns the 
delta matrices from the agents' reports. We will refer to this implementation as CAHR (in short for CAH recomputed).
The next theorem proves that CAHR
is $(\varepsilon,\delta)$-informed
truthful.
\begin{theorem}\label{thm:detail-free-cah}
	If there are at least $q = \bigom{ \frac{n}{\varepsilon^2} \log \left( \frac{m}{\delta}\right)}$ agents reviewing each task, for $m$ tasks and $n$ possible signals,
	then with probability at least $1 - \delta$, then CAHR satisfies
	\[
	%\frac{1}{m}\E{\eye,\eye} \geq \frac{1}{m}\E{F,G} - \varepsilon \quad \forall F,G 
	\E{\eye,\eye} \geq \E{F,G} - \varepsilon \quad \forall F,G 
	\]
\end{theorem}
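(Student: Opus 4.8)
The plan is to mirror the detail-free analysis of CA: first show that, with probability at least $1-\delta$, the estimated matrices $\hat\Delta_b$ (defined from the empirical joints and marginals exactly as in~\eqref{eq:defn-delta}) are close to the true $\Delta_b$ in $\ell_1$, and then show that replacing the true sign matrices $S_b$ by the estimated ones $\hat S_b$ in the payment rule degrades the truthful payoff by only $O(\varepsilon)$. Throughout I condition on the population reporting truthfully, so the statistics used for estimation are drawn from the true distributions (a single deviating pair of agents shifts each empirical count by only $O(1/q)$, which is absorbed into the slack).

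\emph{Step 1 (concentration).} First I would bound $\frac1m\sum_b \lVert \hat\Delta_b-\Delta_b\rVert_1$. For each task $l$, the empirical marginal $\hat P_l(i)$ is an average of $q$ i.i.d. Bernoulli indicators; for each task $b$, the empirical joint $\hat P_b(i,j)$ is an average of $\Theta(q)$ essentially independent co-report samples (restricting to a maximal set of disjoint agent pairs, or invoking a Hoeffding-type bound for bounded-difference/U-statistics, handles the within-task dependence among the $\binom q2$ pairs). Hoeffding plus a union bound over the $O(mn^2)$ estimated probabilities gives, with probability $\ge 1-\delta$ and $q=\bigom{\tfrac{n}{\varepsilon^2}\log\tfrac m\delta}$, that all these estimates are uniformly accurate; combining them through the triangle inequality on~\eqref{eq:defn-delta} and the elementary bound $\lVert \hat P_{t'}\otimes\hat P_{t''}-P_{t'}\otimes P_{t''}\rVert_1\le\lVert \hat P_{t'}-P_{t'}\rVert_1+\lVert \hat P_{t''}-P_{t''}\rVert_1$ yields $\frac1m\sum_b\lVert \hat\Delta_b-\Delta_b\rVert_1 \le \varepsilon/2$.

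\emph{Step 2 (incentives on the good event).} Condition on this event. Since the agents' signals are still governed by the true $P_b$, the computation in~\eqref{eq:dcp1}--\eqref{CA-score} with $\hat S_b$ in place of $S_b$ gives expected score $E(F,G)=\frac1m\sum_b\sum_{i,j}\Delta_b(i,j)\hat S_b(F_i,G_j)$. Introduce the auxiliary quantity $\hat E(F,G)=\frac1m\sum_b\sum_{i,j}\hat\Delta_b(i,j)\hat S_b(F_i,G_j)$, i.e., the expected score one would obtain if the true distribution were $\hat\Delta$. Because $\hat S_b$ is \emph{by construction} the sign matrix of $\hat\Delta_b$, the pointwise inequality $\hat\Delta_b(i,j)\hat S_b(\cdot,\cdot)\le\max(0,\hat\Delta_b(i,j))=\hat\Delta_b(i,j)\hat S_b(i,j)$ gives $\hat E(\eye,\eye)\ge\hat E(F,G)$ for all $F,G$. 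Moreover, since $\hat S_b(\cdot,\cdot)\in\{0,1\}$, both $\lvert E(F,G)-\hat E(F,G)\rvert$ and $\lvert E(\eye,\eye)-\hat E(\eye,\eye)\rvert$ are at most $\frac1m\sum_b\lVert \hat\Delta_b-\Delta_b\rVert_1\le\varepsilon/2$. Chaining these, $E(\eye,\eye)\ge\hat E(\eye,\eye)-\varepsilon/2\ge\hat E(F,G)-\varepsilon/2\ge E(F,G)-\varepsilon$, which is the claim.

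\emph{Main obstacle.} The one genuine subtlety is that $\Delta_b\mapsto S_b$ is discontinuous, so one cannot naively pass from "$\hat\Delta_b\approx\Delta_b$" to "$\hat S_b\approx S_b$"; indeed $\hat S_b$ may disagree with $S_b$ on entries where $\Delta_b$ is tiny. The resolution is exactly the auxiliary-payoff device above: we never need $\hat S_b=S_b$, only that $\hat S_b$ is optimal \emph{for $\hat\Delta_b$} and that $\hat\Delta_b\approx\Delta_b$ in $\ell_1$, and for a fixed sign matrix the payoff is linear in the delta matrix. The remaining work—getting clean tail bounds for the joint estimates $\hat P_b$ despite the dependence among within-task report pairs, and tracking the union bound—is routine and is where the stated $\mathrm{poly}(n)\,\varepsilon^{-2}\log(m/\delta)$ sample size comes from.
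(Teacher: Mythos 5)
Your proposal is correct and follows essentially the same route as the paper's proof: establish $\ell_1$-closeness of $\hat\Delta_b$ to $\Delta_b$ via concentration of the empirical joints and marginals (with the same product-decomposition and union bound over tasks), then exploit that $\hat S_b$ is the optimal sign matrix for $\hat\Delta_b$ together with the fact that, for a fixed $0/1$ score matrix, the payoff is $1$-Lipschitz in the delta matrices. The only cosmetic difference is that you chain through the auxiliary payoff $\hat E(\cdot,\cdot)$ whereas the paper compares directly against $E(S,\eye,\eye)$ using the pointwise bound $\abs{\Delta_b(i,j)(\hat S_b(i,j)-S_b(i,j))}\le\abs{\hat\Delta_b(i,j)-\Delta_b(i,j)}$; both yield the same estimate, and you are in fact more explicit than the paper about the dependence among the within-task report pairs used to estimate the joints.
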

\begin{proof}
See Appendix \ref{app:detail-free-cah}
\end{proof}
%

%\dcp{if you have the full proof then also give it in the appendix} 

Theorem~\ref{thm:detail-free-cah} implies that truthful reporting is
an approximate equilibrium for the detail-free CAH, and that (up to
$\epsilon$) there is no useful joint deviation.
The proof follows from  the fact that any joint
distribution $P_b(\cdot,\cdot)$ (resp. marginal distribution
$P_b(\cdot)$) can be learned with $\bigolog{n^2/\varepsilon^2}$ (resp.
$\bigolog{n/\varepsilon^2}$) samples\footnote{$\bigolog{\cdot}$ is
  $\bigo{\cdot}$ without all the log factors} and observing that $q$
samples from a task gives us $q^2$ samples from the corresponding
joint distribution.
%We would like to point out that our detail-free implementation is
%characteristically different from \cite{shnayder2016informed} in the sense that we require each task to be completed by several agents, 
%rather than having an agent complete several tasks. 
In addition, we can show a general version of~\Cref{thm:detail-free-cah}. Suppose there
are $t$ distinct types of tasks, and  the number of tasks of type $k$ is $m_k$. Then it is sufficient to have $q = \bigomlog{\frac{1}{\sqrt{m_k}}\frac{n}{\varepsilon^2}}$ samples from each task of type $k$. This follows from the observation that if we have
at least $q$ samples from each task of type $k$ then the total number of samples from the joint distribution $P_k(\cdot,\cdot)$ is at least $m_k q^2 = \bigomlog{{n^2}/{\varepsilon^2}}$.
%
%\dcp{give interpretation of the result for the case of task types. i.e.,
%need this many for each type, presumably?}\dm{added a proof sketch and comment.}
%
\begin{algorithm}[h]
	\caption{ 
CAHR mechanism\label{algo:learning-cah}}
	\begin{algorithmic}[1]
		\Require Agent $p$ of a population of $q$ 
		agents provides reviews $(r^p_1,\ldots,r^p_m)$ on
		each of the $m$ tasks.
		\State $T_k(i,j) \leftarrow $ observed freq of signal pair $i,j$ on task $k$.
		\State Pair up the agents uniformly at random, and  run CAH for each pair with the estimated distribution $\{T_k(\cdot,\cdot)\}_{k=1}^m$
	\end{algorithmic}
\end{algorithm}

\if 0
We leave to future work an analogous statement for a setting where tasks
are from a number of discrete distribution 
types.
\fi 
%
%, this $m'\ll m$.
\if 0

\dcp{deb--- will want to defer proof to full version paper. also, may be helpful
	to give a bit of intuition for why $m$ appears in the analysis given that
	we already say we need this much data ``for each task.'' Finally, could
	we state the complexity
	result for different KINDS of tasks rather than tasks? I guess it
	would be easy to do and I think it's going to seem much more appealing.}
\dm{(1) I was trying to compare the sample complexity of this theorem with the 
	sample complexity of the original CA paper. Since the original CA paper
	guarantees that the per-task regret is bounded by $\varepsilon$, I also had
	a factor of $1/m$ to bound the average regret. Note that, $\E{F,G}$ still equals $1/m \sum_{b=1}^m E_b(F,G)$
	(2) Deriving a sample complexity bound for different kind of tasks actually seems cumbersome. For a given bonus
	task $b$, learning $\Delta_b$ requires samples from the joint distribution of $b$ and also samples
	from the marginals of the other tasks. So it's easier to state a sample complexity result for learning all the $m$ scoring
	matrices. Do you have something like the following in mind -- the sample complexity for a given task $b$ should depend on
	how different $b$ is from the ``average'' of other tasks, in case it is the same we should get something similar to the
	original CA sample complexity bound.}
\dcp{In regard to why $m$ appears in the
	sample complexity: I think you're pointing out that at
	the moment we're essentially deriving 
	the sample complexity result separately,
	for any particular $\Delta_b$ (just before eq. 22
	in the proof). But looking at the proof, 
	why does $m$ appear in the
	term needed for the number of samples we need from 
	the
	joint distribution $P_b$ and from the marginal distribution
	on $P_b$?
	
	In regard to stating this thm for the number of different types of
	tasks, rather than the number of tasks: I understand that we need to
	construct each possible scoring matrix, but couldn't each one be
	derived from (i) the type of task $b$ is, and (ii) the frequency of
	different types of tasks-- so that we know how to calculate the sum
	product marginal term? If you agree with this, could we add a sentence
	to say that the result could be stated for $m'$
	different types of tasks, in place of $m$ tasks?}

\fi

\subsection{Cross Correlated Agreement}

%\dcp{consider dropping all discussion of CCAH to make space, and just
%mention in the conclusion / future work}

So far we have assumed that the probabilities of observing signals are
independent across different tasks. However, two users' responses to two
different tasks may be correlated (e.g. consider two questions -- (1) does
this restaurant serve alcohol and (2) does this restaurant serve wine?)
We will write $P_{l',l''}(i,j)$ to denote the
probability that a user sees signal $i$ on task $l'$ and another user
observes signal $j$ on task $l''$. When there are no correlations
among signals for different questions we have $P_{l',l''}(i,j) =
P_{l'}(i) P_{l''}(j)$. 
The {\em Cross
	Correlated Agreement for Heterogeneous Tasks} ({CCAH})
mechanism generalizes { CAH} by using the probabilities
$P_{l',l''}(\cdot,\cdot)$ for different pairs of tasks $(l',l'')$.

\begin{itemize}
	\item {CCAH} is same as {CAH} except it defines $\Delta_b(i,j)$, the $(i,j)$-th entry of delta matrix for task $b$ as : 
	\begin{align*}
	&P_b(i,j) - \frac{1}{(m-1)(m-2)}\sum_{\stackrel{t', t'' \in [m] \setminus \set{b}}{
			\& \ t' \neq t''}}P_{t',t''}(i,j)
	\end{align*}
\end{itemize}

%\dcp{are you sure that we don't also need to worry about 'cross-correlation'
%between b and l' and b and l''?}\dm{yes, this is because for a given bonus task $b$ the mechanism does not collect
%samples from the pair $b,l'$ and $b,l''$. This helps us to write the utility as $E_b(F,G) = \sum_{ij}P_b(i,j)S_b(F_i,G_j)  - \Em{l',l''}{P_{l',l''}(i,j)S_b(F_i,G_j)}$}%

CCAH is strong truthful and informed truthful under similar conditions stated in theorems \ref{thm:strong-cah} and 
\ref{thm:informed-ca3} respectively. Moreover, a sample complexity result analogous to~\Cref{thm:detail-free-cah} holds for a
detail-free implementation of CCAH. This is because if we have at least $q$ samples from both $l'$ and $l''$, then  we have at least $q^2$ samples
from the joint distribution $P_{l',l''}(\cdot,\cdot)$.  

\if 0
\subsection{Strong-Truthfulness}

\begin{theorem}\label{thm:stong-ccah}
	If $\{\Delta_b\}_{b=1}^m$ satisfy {\bfseries Condition 1} then {\bfseries CCAH} is strongly truthful.
\end{theorem}
\begin{proof}
	The proof is same as the proof of~\Cref{thm:strong-cah} since the previous proof does not use the structure of $\Delta_b$ matrices.
\end{proof}
%\dm{Open question : is condition 1 necessary for strong truthfulness?}

%\subsection{Categorical $\Delta_b$ for CCAH}
\if 0
Suppose there are only two signals ($Y$ and $N$) and three tasks ($b, l'$ and $l''$). Then matrix $\Delta_b$ is
\begin{align*}
\begin{array}{c|c|c|}
& Y \\ \hline
Y & P_b(Y,Y) - P_{l',l''}(Y,Y) \\ \hline
N & P_b(N,Y) - \frac{1}{2} (P_{l',l''}(N,Y) + P_{l'',l'}(N,Y))\\ \hline 
\end{array}
\end{align*}
\begin{align*}
\begin{array}{c|c|}
&N \\ \hline
Y & P_b(Y,N) - \frac{1}{2} (P_{l',l''}(Y,N) + P_{l'',l'}(Y,N))\\ \hline
N & P_b(N,N) - P_{l',l''}(N,N) \\ \hline
\end{array}	
\end{align*}
If $P_b(\cdot,\cdot)$ is symmetric, so is $\Delta_b$. Moreover, the sum of all the entries of $\Delta_b$ is zero. Therefore, $\Delta_b$ is categorical iff
\begin{align}\label{eq:cat}
P_b(Y,Y) > P_{l',l''}(Y,Y) \ \text{ and } \ P_b(N,N) > P_{l',l''}(N,N)
\end{align}
\fi 
As before, we can show that when there are only two signals {\bfseries Condition 1} and categorical $\Delta_b$ are equivalent. For more than two signals, in addition to each diagonal entry of the delta matrix to be positive, we need that each average off-diagonal entry is negative.

\subsection{Informed Truthfulness}

\begin{lemma}\label{lem:delta-zero-ccah}
	For each $b$, $\sum_{i,j} \Delta_b(i,j) = 0$
\end{lemma}
\begin{proof}
	The proof is same as~\Cref{lem:delta-zero} if we use the fact $\sum_j P_{l',l''}(i,j) = P_{l'}(i)$.
	\if 0
	\begin{align*}
	\sum_{i}\sum_{j} \Delta_b(i,j) &= \sum_i \sum_j \left\{P_b(i,j) - \frac{1}{(m-1)(m-2)} \sum_{l' \in [m] \setminus \set{b}} 
	\sum_{l'' \in [m] \setminus \set{b,l'}} P_{l',l''}(i,j) \right\} \\
	&= \sum_i \left\{ P_b(i) - \frac{1}{(m-1)(m-2)} \sum_{l' \in [m] \setminus \set{b}} 
	\sum_{l'' \in [m] \setminus \set{b,l'}} P_{l'}(i)   \right\} = 1-1= 0
	\end{align*}
	\fi
\end{proof}

\begin{theorem}\label{thm:informed-ccah}
	If for each $b$, $\Delta_b$ is symmetric and each entry of $\Delta_b$ is non-zero, then {\bfseries CCAH} is informed truthful. 
\end{theorem}
\begin{proof}
	The proof is same as before since the proof of~\Cref{thm:informed-ca3} does not use the structure of $\Delta_b$ matrices and lemma \ref{lem:delta-zero-ccah} holds.
\end{proof}

\subsection{Combining CCAH with Estimation}
The detail-free implementation of CCAH is similar to algorithm \ref{algo:learning-cah}, except that 
it estimates the joint distributions $\{P_b(\cdot,\cdot)\}_{b=1}^m$ and the pairwise distributions $\{P_{t',t''}(\cdot,\cdot)\}_{t' \neq t''}$
from the samples and then run {\bfseries CCAH} with the estimated distributions. We can prove that the detail-free implementation of CCAH
has identical sample complexity as the detail-free implementation of CAH.

\begin{theorem}\label{thm:detail-free-ccah}
	If there are at least $q = \bigom{ \frac{n}{\varepsilon^2} \log \left( \frac{m}{\delta}\right)}$ agents reviewing each task, for $m$ tasks and $n$ possible signals,
	then with probability at least $1 - \delta$, the detail-free implementation of {\bfseries CCAH} satisfies
	\[
	%\frac{1}{m}\E{\eye,\eye} \geq \frac{1}{m}\E{F,G} - \varepsilon \quad \forall F,G 
	\E{\eye,\eye} \geq \E{F,G} - \varepsilon \quad \forall F,G 
	\]
\end{theorem}\begin{proof}
	\todo{See Appendix \ref{app:detail-free-ccah}}
\end{proof}
\fi 

\if 0
\subsection{Asymmetric Strategy Profile}\label{sec:asymmetric}
\begin{theorem}\label{thm:asymmetric}
	For any pair of asymmetric strategies $F = \{F_k\}_{k \in [m]}$ and $G = \{G_k\}_{k \in [m]}$ there exists a symmetric strategy $\bar{F}$ such that
	\[
	\abs{\E{F,G} - \E{\bar{F},G} } = \bigo{n^2 / m}
	\]
\end{theorem}
\begin{proof}
	See Appendix \ref{app:uniformization} for a proof.
\end{proof}
%\dcp{i think we'll need a full proof in appendix as well}
%The proof considers the following symmetric strategy 
%	\[
%	\bar{F}_{ij} = \sum_{k=1}^m \frac{P_k(i)}{\sum_{k'} P_{k'}(i)} F^k_{ij}
%	\]
        For most applications the number of signals $n$ is small, and
        as long as the number of tasks $m$ is large ($\geq
        n^2/\varepsilon$), we have $\E{\bar{F},\bar{G}} \geq \E{F,G} -
        2\varepsilon$. This implies that as long as there is a large
        number of tasks, we can prove approximate versions of
        Theorems~\ref{thm:strong-cah} and~\ref{thm:informed-ca3} even
        under arbitrary asymmetric strategy profiles.
\fi 
\section{Experimental Results}

\iftoggle{anon}{XYZ}{Google Local Guides} is 
a platform for collecting user generated content in regard to places
on \iftoggle{anon}{a mapping platform}{Google Maps}. A user can provide information by answering `yes',
`no,' or `not sure' to a series of questions.\footnote{We 
	ignore the `not sure' response for a question
	because of unclear semantics: does it mean the user has
	missing information, or the question is not relevant to
	the location. Thus, {\em a priori} it is unclear whether to expect
	correlation between different reports.}
A user is awarded
% points 
%based  on the
%quantity of her 
%contribution, where a contribution can be a review or a 
%photograph  or any update about the place.
% \dcp{deb/matt--- complete by explaining the cap
%on this in converting to Google drive, and
%the role of scores in leading
%to social capital on the app via 'expert level' and visibility
%to others.}
%\iftoggle{anon}{XYZ}{Google} awards 
one point for each contribution,
where a contribution can be a review or a 
photograph  or any update about the place,
 with a maximum of five
points per place. Based on the number of points received a user 
is in  one of five 
levels on the platform, with higher levels providing
better benefits such as free \iftoggle{anon}{online storage}{Google Drive space}, visibility on the
\iftoggle{anon}{XYZ channel}{Local Guides channel}, and access to new 
products before they are generally released.
%
%what
%the reports would be under the strategic behaviors. We view this as a
%first validation that the CAH mechanism is effective, and leave this
%``closed-loop'' analysis to future work.
%
%

%We estimate 
%pairwise signal distributions from
%data
%that represents user reports for
%Google Local Guides in the US between {\bf June 2015} and {\bf May 2016}.
%Each task 
%is an instance of a particular type of task,
%with 
A type of task is 
specified by a triple of the form:
\begin{align*}
\mathit{Region} \times \mathit{Business Type}
\times \mathit{Question}
\end{align*}

A region is a US state, there are four business types such as
``restaurant,'' ``bar,'' ``public location'' or ``cafe'' (these are
anonymized in our data), and there are 143 distinct questions 
in the data.  The questions are also anonymized,
but
categorized by \iftoggle{anon}{XYZ}{Google} as ``subjective''
or ``factual'' (e.g., ``is this restaurant noisy?'' vs
``does this cafe have free WiFi?'').
%
%In the present context, a 
Each task type has a corresponding
pairwise signal distribution. 

The data are counts of pairs of signal reports, 
broken down by (region, business type, question).
%
%It is these pairwise signal
%distributions that we receive as our data.
The number of different questions (and thus types of tasks) per pair
of region and business type varies from 75 to 135, with an average of
102. There are 51 regions and 4 business types per state. Thus, the
total number of task types for which we have data is around 20,885.

 For the purpose of our simulations we treat the
distributions for these task types as describing the true signal
distributions.  The goal of the experiments is to compare, under this
assumption, the robustness of the CAH mechanism with other 
mechanisms in the literature.
%
% These
%estimates are used to define scores in 
%each mechanism.
%\dm{Galen seems okay to share the statistics but not the data.}
%
%dcp: edit out for space 
%
%
%The distributions vary by region, business type and question, and thus
%correspond in a meaningful way to different types of tasks.\footnote{Formally,
%the incentive properties of the CAH mechanism 
%would hold if the participants' prior beliefs on signal distributions
%do not vary across tasks within the same type of task.
%In fact, it is a limit of our current analysis that this is
%unlikely to hold for tasks at such a coarse definition of 
%a region and we plan to repair our analysis with 
%more granular task types in future work.}
%
%
%
%
\if 0

and we are interested in the statistics
on yes/no responses to each of these tasks.
The motivation for dividing the data into region and business type
is to control for the heterogeneity that users might be expected
to know about. For example, the distribution on
responses to  a question
such as  ``does this cafe have free wifi?'' might be expected
to be different in rural and urban locations, but users might
be expected to view all cafes in an urban location such as New York
City to be ex ante equivalent from this perspective.

\fi
%We compare three different peer prediction mechanisms
%on heterogeneous tasks: the CAH mechanism, and two non multi-task
%mechanisms, namely the
For this, we consider the {\em robust peer truth serum} (RPTS)
mechanism~\cite{radanovic2016incentives} (which sets a score of
$1/P(i)$ for agreement on signal $i$ and $1$ otherwise)\footnote{This
  is equivalent to a scaled version of the rule given by Radanovic et
  al.~\shortcite{radanovic2016incentives}, who prescribe a score of
  $\alpha(1/P(i) - 1)$ for agreement on signal $i$ and $0$ otherwise.
  Thus, our version has equivalent incentive properties in
  expectation.} and the {\em Kamble}~\shortcite{kamble2015truth}
mechanism (which sets a score of $1/\sqrt{P(i,i)}$ for agreement on
signal $i$).

In simulating CAH, we first compute the delta matrices for each task
type using~\Cref{eq:defn-delta}.  For this, we assume for a given
(region, business type, question) that the penalty tasks are sampled
from other questions associated with the
same  (region, business type).
From these delta matrices, we then use~\Cref{eq:dcp1} to compute the
expected score for each question, before averaging these scores over
all questions associated with a (region, business type)
pair. %We take a similar
%approach for computing the average
%expected score for a (region, business type) under
\if 0
%\dcp{deb-- did I get this right?}
%
\if 0

\dcp{deb/matt---- need more specificity on how we
	estimate the payments in CAH. in fact, for CAH it's not clear yet
	where the delta matrix comes from and how it is computed. Is it for
	a particular triple of (bonus, penalty, penalty) for example? or, is
	it for a bonus task with penalties sampled at random from other
	tasks? if so, how do we handle the problem that different subsets of
	questions may be answered by different users?  do we ignore this and
	is it as if we assume that all questions that correspond to a given
	(region, business) are answered by all users?}  \dm{yes, we do
	assume that a user answers all questions for a region and business
	pair and use~\Cref{CA-score} to compute the average score. We do the
	same for computing the average score under RPTS and Kamble. We could
	have computed the average in~\Cref{CA-score} over specific subsets
	if we had some additional user level data. However, I do think if
	the users answer a large fraction of questions for a region and
	business type, our approximation will be close to the actual average
	score (probably due to some form of law of large numbers)}
\fi
%

%\subsection{Google Local Guides data}

%We look at a dataset from Google Local Guides which is a platform for different users to share their discoveries on Google Maps. In this platform,
%a user might review a restaurant by answering 'yes', 'no' or 'not sure' to a series of questions. 
\fi 
For the single task, RPTS and Kamble mechanisms, we compute
the score for a (region, business type) by averaging the
individual scores recevied on each question associated with the
(region, business type) pair. Finally, since the payments of CAH are
bounded between $0$ and $1$, we normalize the payments of RPTS and
Kamble to $[0,1]$.\footnote{%\dcp{need to explain that we do static
%normalization.} 
For two signals $0$ and $1$, with
  estimated probabilities $\hat{P}(0) > \hat{P}(1)$, RPTS pays more on
  signal $1$ than $0$. Because of this, we set the payment to $1$ for
  agreement on signal $1$, and divide all payments through by the
  unnormalized payment for agreement on signal $1$ (i.e.,
  $1/\hat{P}(1)$.) The rewards of Kamble are normalized analogously. Our 
normalization scheme is static i.e. the normalization constants are not 
recomputed when the probabilities are estimated based on some possible misreports of the agents.}
Along with CAH, we also evaluate CAHR, the empirical version of the CAH mechanism. CAH has access
to the true delta matrices, whereas, CAHR computes the delta matrices based on the reports of the agents
and then uses these delta matrices to score reports.
%\dcp{need to also mention that we do both CAH and empirical CAH, 
%and explain briefly what we mean by the difference.}

\subsection{Unilateral Incentives for Truthful Reports}

We consider three kinds of strategic behaviors: {\em constant-0}
(report `yes' all the time), {\em constant-1} (report `no' all the
time) and {\em random} (report `yes' w.p. 0.5). 
%\dcp{matt-- did I get this right?}
%

We first consider unilateral incentives to make truthful reports, for
various assumptions about how the behavior of the rest of the
population. 
As an illustration,  Figure~\ref{fig:1} shows the expected benefit to being
truthful vs following some other behavior, considering the
average score for each (region, business type). We consider, in
particular, the benefit to being truthful vs the alternate
behavior 
%from being
%truthful vs adopting some other behavior for three different
%assumptions about the rest of the population: we vary the fraction $p$
%of the population that are truthful, with the remaining $1-p$ fraction
%using the same strategic behavior as the agent.  For example, the
%
when $p=0.8$ of the population is truthful and the rest follow
the same, alternate strategy. 
%
%$p=0.8$ histograms show the distribution of the advantage of truthful
%behavior relative to each possible strategic behavior, when fraction
%$0.8$ follows truthful strategy and the remaining $0.2$ fraction
%follows the same strategic behavior.
%
%
This models 20\% of the agents being able to coordinate on a deviation
from truthful play.  
%In addition to RPTS, Kamble and CAH, the plots in
%Figure~\ref{fig:1} also include the mechanism CAHR which is the
%empirical version of CAH. To be precise, 
\footnote{For CAHR, we first recompute the joint probabilities when
  $p$ fraction of the population is truthful and $1-p$ fraction adopts
  some other strategy, and then compute the delta matrices with
  respect to the new joint probability distributions. On the other
  hand, CAH uses the delta matrices computed using the original joint
  probability distributions.}

\begin{figure*}[t!]
	\minipage{0.245\textwidth}
	\includegraphics[width=\linewidth]{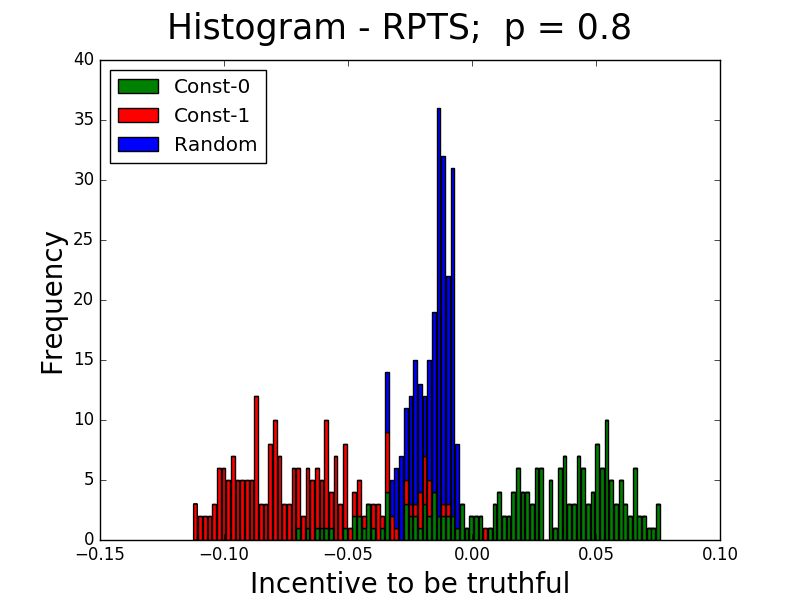}
	\label{fig:etp-rpts}
	\endminipage\hfill
	\minipage{0.245\textwidth}
	\includegraphics[width=\linewidth]{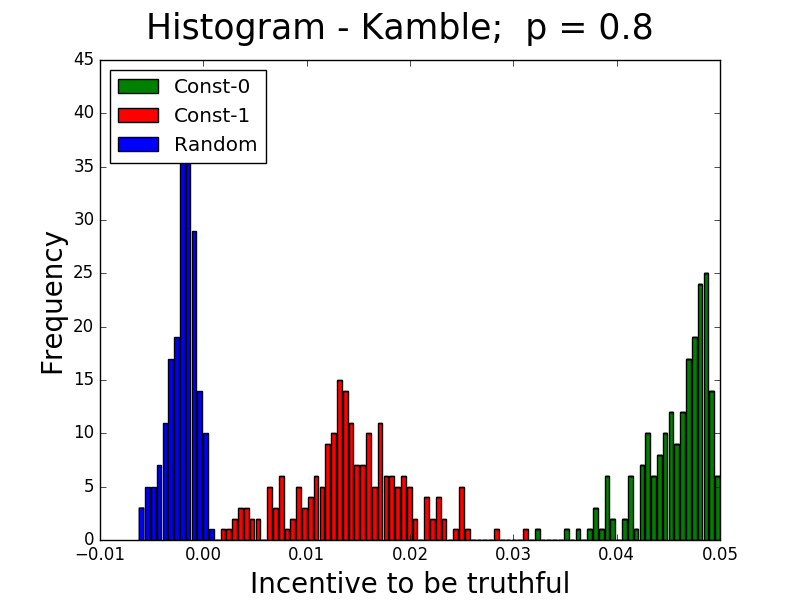}
	\label{fig:etp-kamble}
	\endminipage\hfill
	\minipage{0.245\textwidth}%
	\includegraphics[width=\linewidth]{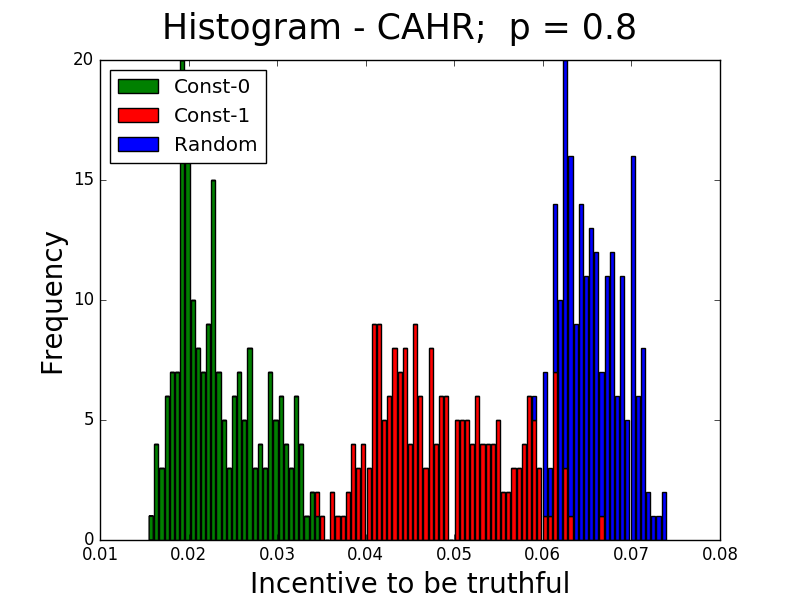}
	\label{fig:etp-cah}
	\endminipage\hfill
	\minipage{0.245\textwidth}
	\includegraphics[width=\linewidth]{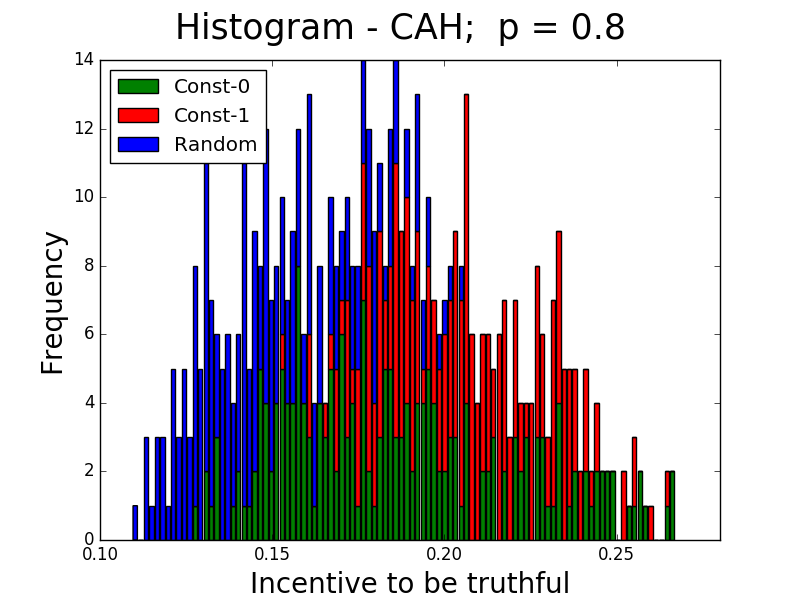}
	\label{fig:etp-cahr}
	\endminipage\hfill
	\caption{Histograms for the 204
		(region, business type) pairs of
		expected benefit
		(averaged across questions) from truthful behavior
		vs.  some other strategy, when fraction $0.8$
		is truthful and fraction $0.2$ adopt the same,
		non-truthful
		strategy.  \label{fig:1}}
\end{figure*}

We observe that the support of the distribution for the CAH and CAHR
mechanism is positive, and thus it retains an incentive for truthful
behavior. We found this to be a common property for different values
of $p$, i.e. CAH and CAHR retains good unilateral incentives for all
values of $p$, even when all agents play the same way. 
By contrast, both the RPTS and Kamble fail under some strategy, i.e.  there exists
a strategy ({\em random} for Kamble and either {\em random} or {\em
  constant-1} for RPTS) such that playing that strategy is more
beneficial than playing truthful strategy when some fraction plays
this alternate strategy.  Although figure~\ref{fig:1} shows this for
$p = 0.8$, this is representative of other values of $p$. The plots
for several other values of $p$ are included in Appendix \ref{app:uni-plots}.

When the prior probability satisfies the {\em self-predicting}
condition, the RPTS mechanism has truth-telling as a strict
equilibrium %\dcp{not sure this next bit is right?} 
and the truthful
equilibrium provides at least as high payoff than any other coordinated
equilibrium where all agents report the same. Since, incentive properties
are not proven under RPTS except when the self-predicting condition is
satisfied, we evaluated the RPTS
mechanism by restricting only to questions that satisfy the
self-predicting condition. However, the corresponding plot is similar
to the plot shown in figure~\ref{fig:1}.  To conclude, compared to single task mechanisms
like RPTS and Kamble, CAH mechanisms provide good guarantees against
unilateral deviation.
 
\if 0
All three mechanisms
are incentive aligned for $p=1$, i.e. when all other agents
are truthful.  

As more of the population is truthful, the truthful behavior becomes
increasingly beneficial.  In particular, looking at the plots for
$p=1$ we see all three mechanisms provide positive incentives for
truthful vs strategic behavior.  However, when $p$ is $0$ or $0.5$ and
other agents are also strategic, the CAH mechanism retains an
incentive for truthful behavior while const-0 and const-1 strategies
become beneficial for agents in the RPTS and Kamble mechanisms. This
analysis shows that truthful reporting is a best response for an agent
on average, over all questions associated with a (region, business
type) pair, irrespective of the behavior of the rest of the
population. For the particular case of $p=0$, when the other agents
all follow some non-truthful strategy, the average score in the CAH
mechanism is zero irrespective of the behavior of an agent.
%
% in every region,
%business type combination is non-negative. In other words, there is no
%individual incentive not to be truthful in this population.%
%
\fi 

\subsection{Benefit from Coordinated Misreports}

Irrespective of whether or not a coordinated deviation is robust
against agents choosing to make truthful reports instead, we also
consider the expected payoff available to a group of agents who manage
to coordinate on some non-truthful play. 
Figure~\ref{fig:2} plots the average and standard error for the expected payments
associated with the 204 (region, business type) pairs. 
For each strategy and for a particular value of $p$, we plot the 
expected payment and the standard error across the 204 pairs, when $p$ fraction of 
population is truthful and the remaining $1-p$ fraction of the
population adopts the same strategy. The constant line shows the average expected payment
 across all the pairs when everyone is truthful. 
%\dcp{hopefully the conclusion is that CAH is robust in a way that
%the other mechanisms are not!}
 \begin{figure*}[!h]
 	\minipage{0.245\textwidth}
 	\includegraphics[width=0.9\linewidth]{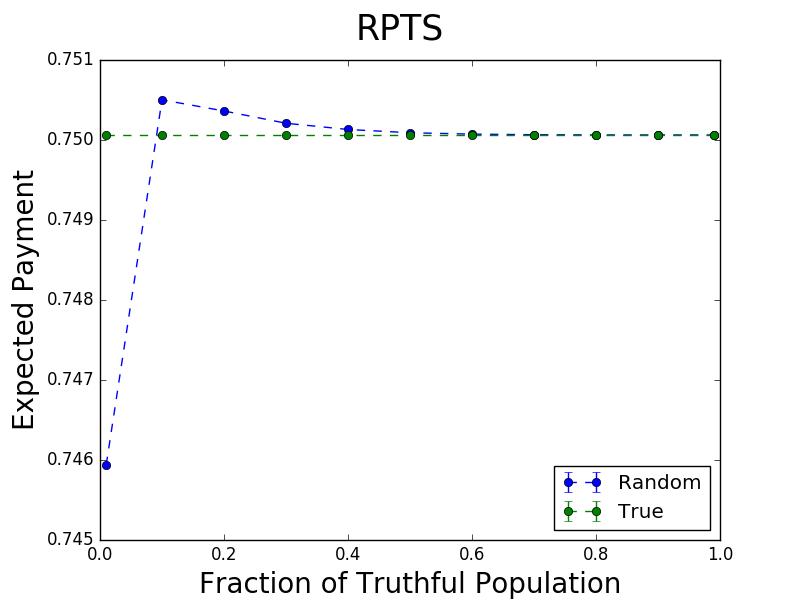}
 	\label{fig:etp_ca}
 	\endminipage\hfill
 	\minipage{0.245\textwidth}
 	\includegraphics[width=0.9\linewidth]{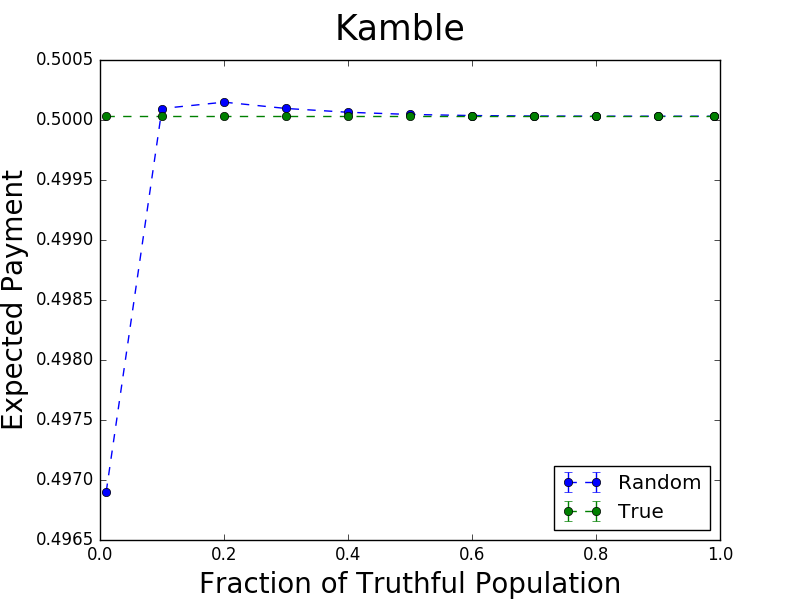}
 	\label{fig:etp-kamble}
 	\endminipage\hfill
 	\minipage{0.245\textwidth}%
 	\includegraphics[width=0.9\linewidth]{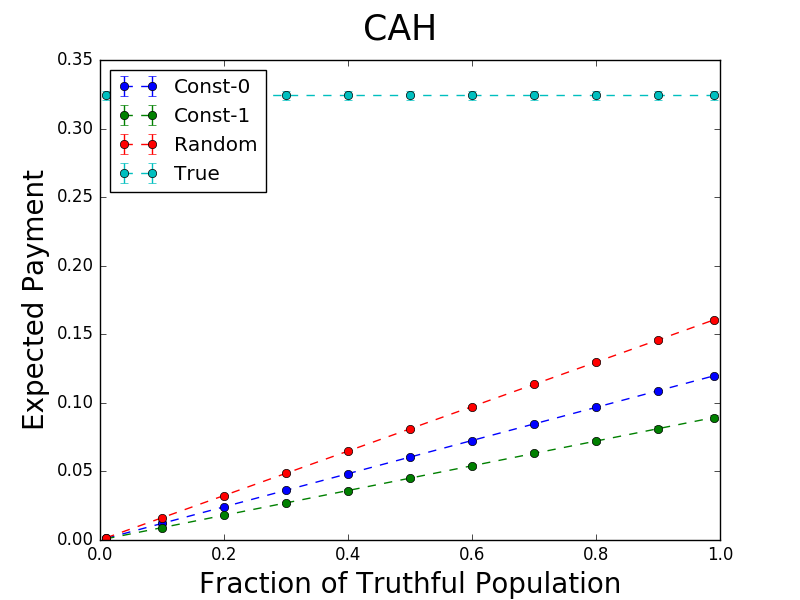}
 	\label{fig:etp-cah}
 	\endminipage\hfill
	\minipage{0.245\textwidth}%
	\includegraphics[width=0.9\linewidth]{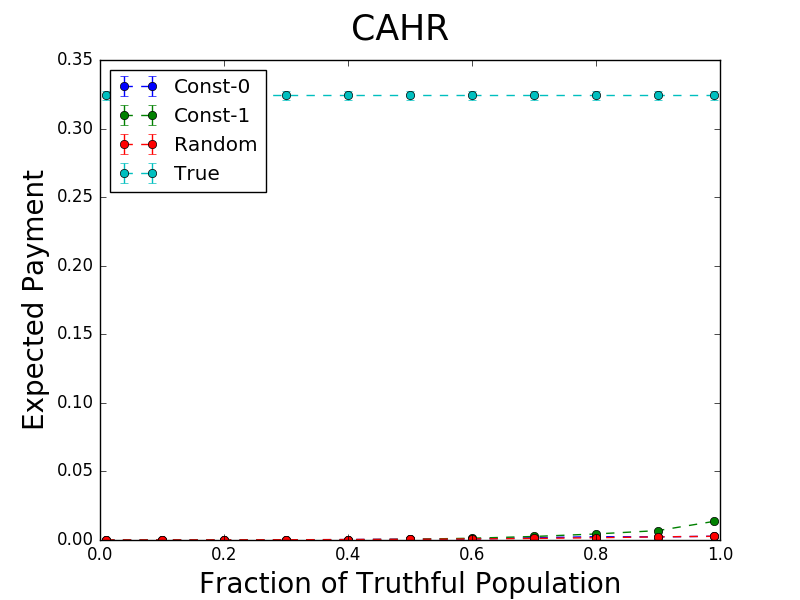}
	\label{fig:etp-cahr}
	\endminipage\hfill
 	\caption{Expected score for following each of four
 		strategies, when
 		$p$ fraction of the population is truthful and  $1-p$ fraction 
 		adopt the same strategy. Averaged over
 		questions associated with a typical (region, business type)
 		pair. \label{fig:2}} 
 \end{figure*}
CAH mechanism has the expected payments from all truthful strategy higher
than the other three possible strategies (const-0, const-1 and random) for all possible
values of $p$. This means that CAH mechanism is robust against coordinated misreport by
any fraction of the population. For RPTS and Kamble, we only plot the expected payments
due to the all truthful strategy and the random strategy for various values for $p$. We
omit the plots for the
expected payments for const-0 and const-1 strategies since the payments under these strategies
are significantly lower than the all truthful
strategy under both RPTS and Kamble mechanism and do not provide profitable coordinated 
misreports. We now see that for intermediate values of $p$, the random strategy provide
a profitable coordinated misreporting profile under both the RPTS and Kamble mechanism.
Therefore, unlike CAH, single task mechanisms like RPTS, Kamble are not always robust
to coordinated deviations.

\subsection{Subjective vs Factual Tasks}

%\dcp{ok to drop this section if you need space and just mention in future work}
Figure~\ref{fig:3} shows the cumulative
distribution on expected scores at
truthful reporting in each mechanism, where each data point
corresponds to a different (region, business type, question) triple.
Two lines are shown for each mechanism: one corresponding to 
questions that are categorized as `factual' and one
corresponding to questions that are categorized as `subjective.'
\begin{figure*}[t!]
	\minipage{0.33\textwidth}
	\includegraphics[width=0.9\linewidth]{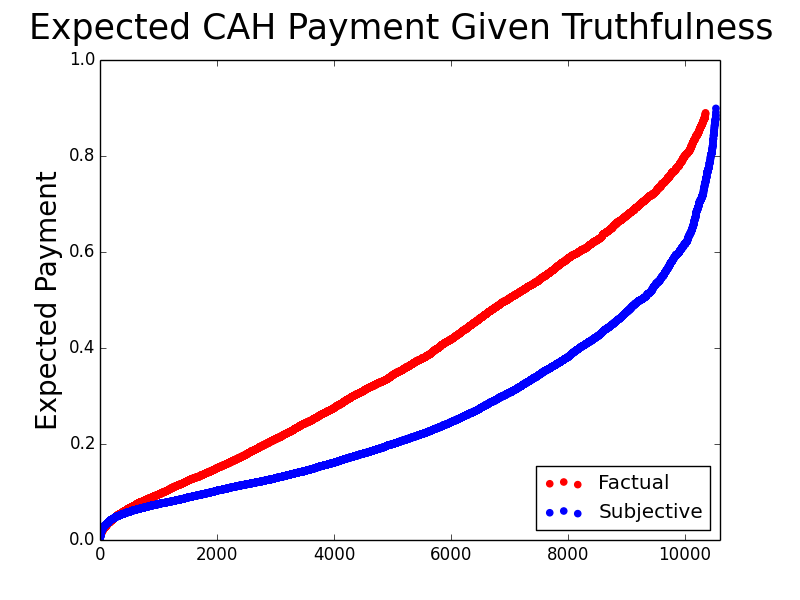}
	\label{fig:etp_ca}
	\endminipage\hfill
	\minipage{0.33\textwidth}
	\includegraphics[width=0.9\linewidth]{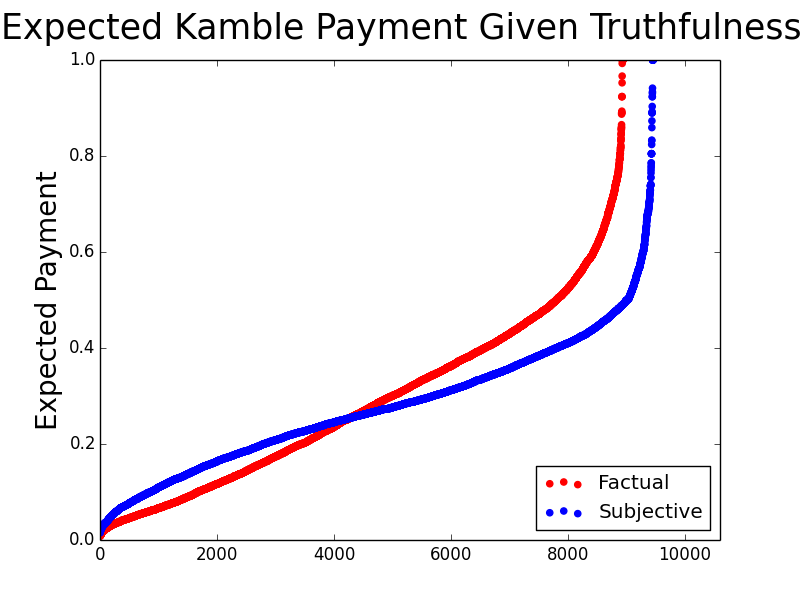}
	\label{fig:etp-kamble}
	\endminipage\hfill
	\minipage{0.33\textwidth}%
	\includegraphics[width=0.9\linewidth]{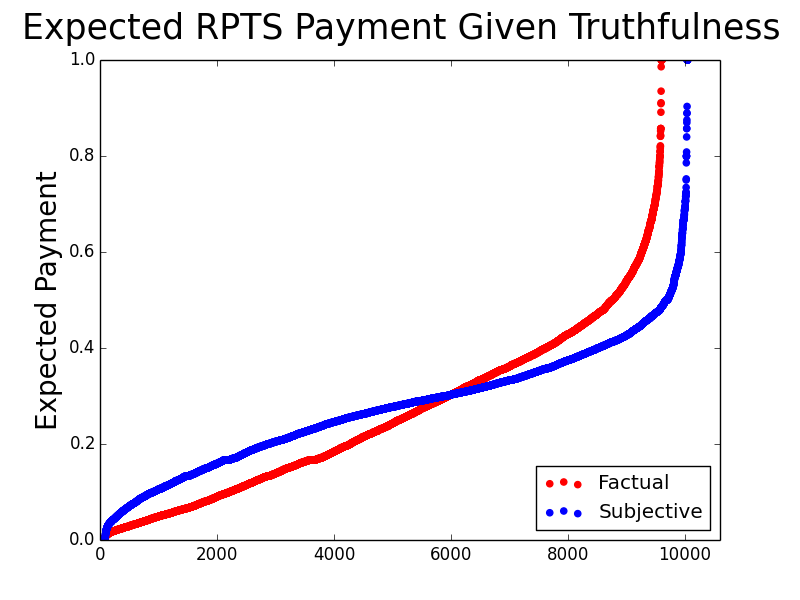}
	\label{fig:etp-rpts}
	\endminipage
	\caption{\label{fig:3} 
		Cumulative distribution on expected payments at truthful reporting
		in each mechanism, with  results separated into questions that are
		categorized as `factual' and those that are categorized as `subjective.'}
\end{figure*}
The subjective
questions tend to provide lower expected payment than the factual
questions under the CAH mechanism. 
%\dcp{this says `higher' shouldn't it be `lower'?} 
%
This is consistent with the intuition
that people perceive subjective questions differently than factual
questions.
For the Kamble and RPTS mechanisms, the variability in 
expected payment is larger across factual questions than 
subjective questions, with the expected payment
for subjective questions tending to fall in a
narrow band.
%as a result of the probability
%distributions become more extreme (and thus affecting the ``1/probability''
%nature of these payment schemes.) \dcp{does this sound like a reasonable
%hypothesis?}

% and CAH mechanism seems to provide a strong distinction
%between them.   
%

\section{Conclusions}

We study the peer prediction problem when users complete heterogeneous
tasks.  We introduced the CAH mechanism, which is informed-truthful
under mild conditions and can also be used together with estimating
statistics from reports for the purpose of computing scores.  
The simulation results suggest that CAH provides better incentive for
being truthful and is more resistant to coordinated misreports than
the RPTS and Kamble mechanisms. We also noted that CAHR, the empirical version
of CAH has similar incentive guarantees, in contrast to the empirical versions
of the single-task peer prediction mechanisms.
%In future work we should also compare
%the mechanisms when we re-estimate statistics under misreports.
%
%\dcp{drop mention of this if we don't use, and be sure to add
%comments on CAHR!} 
%We design the CCAH mechanism when the reports from users across tasks may not be independent.
%
% (e.g.,
%responses to questions about the same place in the context of
%Google Local Guides.)
% We have an extension of the CAH mechanism to
%handle this, and in future work plan to develop a detail-free version
%of this generalization.
%
% provide a detail-free version of
%CCAH, in particular we would like to understand whether correlations
%among different tasks can be leveraged to improve the sample
%complexity. 
%
%\dcp{we've done the following now! maybe instead speculate that this mechanism is ready for practical evaluation as a next step?} 
We believe that the theoretical guarantees of the multi-task mechanisms and their attractive incentive properties suggest that
such mechanisms are ready to be applied and evaluated in practice, from peer grading to rating.
The most important directions for future work  are to
design mechanisms that can 
handle agent heterogeneity (agents that
vary by taste, judgment, noise, etc.) as
well as task heterogeneity. We are also interested in developing
specific versions of the CAH mechanism for particular models
of heterogeneity, such as the generalized Dawid-Skene scheme~\cite{dawid1979maximum}.
\if 0
\section*{Acknowledgments}

This work is supported in part by Google, the SEAS TomKat fund, and
NSF grant CCF-1301976. Leifer's work was supported by a Harvard
College PRISE fellowship.
\fi

\newpage

\bibliographystyle{named}
\bibliography{references}

%\newpage 

\onecolumn

\section{Appendix}
\subsection{Proof of Lemma \ref{lem:delta-zero}}
\label{app:defn-zero}
\begin{align*}
&\sum_{i}\sum_{j} \Delta_b(i,j) = \sum_i \sum_j \left\{P_b(i,j) \right. \\ &-\left.\frac{1}{(m-1)(m-2)} \sum_{l' \in [m] \setminus \set{b}} 
\sum_{l'' \in [m] \setminus \set{b,l'}} P_{l'}(i) P_{l''}(j) \right\} \\
&= \sum_i \left\{ P_b(i) - \frac{1}{(m-1)(m-2)} \sum_{l' \in [m] \setminus \set{b}} 
\sum_{l'' \in [m] \setminus \set{b,l'}} P_{l'}(i)   \right\}\\& = 1-1= 0
\end{align*}

\subsection{Proof of Theorem \ref{thm:strong-cah}}
\label{app:strong-cah}
Suppose both the agents adopt the truthful strategy, which corresponds to the identity matrix $\eye$. Then the expected payment is given as 
\begin{align}
E(\eye,\eye) = \sum_{b=1}^m  \sum_{i,j : \ \Delta_b(i,j) > 0} \Delta_b(i,j)
\end{align}

On the other hand for any two arbitrary deterministic strategies $F$ and $G$, 
\begin{align}
E(F,G) = \sum_{b=1}^m  \sum_{i,j} \Delta_b(i,j) S_b(F_i, G_j) \leq \sum_{b=1}^m  \sum_{i,j : \ \Delta_b(i,j) > 0} \Delta_b(i,j) = E(\eye,\eye)
\end{align}

To show strong truthfulness, consider an asymmetric joint strategy $F \neq G$. Then there exists $i$ such that $F_i \neq G_i$.
This reduces the expected payment by at least 
\begin{align}
\sum_{b=1}^m \Delta_b(i,i) S_b(F_i,G_i)
\end{align}

Since $F_i \neq G_i$, we have $\sum_{b=1}^m \Delta_b(F_i,G_i) < 0$ and there exists $l'$ such that $\Delta_{l'}(F_i,G_i) < 0$ (or $S_{l'}(F_i,G_i) = 0$). Therefore, the expected payment reduces by at least $\Delta_{l'}(i,i) > 0$.

Now consider symmetric, non-permutation strategy $F = G$. Then there exist $i \neq j$ such that $F_i = G_j = k$ and the expected payment includes 
\begin{align}
\sum_{b=1}^m \Delta_b(i,j) S_b(k,k) = \sum_{b=1}^m  \Delta_b(i,j)  < 0
\end{align}

The first equality uses the fact $S_b(k,k) = 1$ since $\Delta_b(k,k) > 0$ for each $b$.
\if 0
\subsection{Proof of Theorem \ref{thm:cat-cond1-cah}}
\label{app:cat-cond1-cah}
Suppose there three tasks 
($b, l'$ and $l''$) with two
signals ($Y$ and $N$) for each task. Matrix $\Delta_b$ is:
\if 0
\begin{align*}
\begin{array}{c|c|c|}
& Y & N\\ \hline
Y & P_b(Y,Y) - P_{l'}(Y)P_{l''}(Y)  & P_b(Y,N) - \frac{1}{2} (P_{l'}(Y)P_{l''}(N) + P_{l''}(Y)P_{l'}(N))  \\ \hline
N & P_b(N,Y) - \frac{1}{2} (P_{l'}(N)P_{l''}(Y) + P_{l''}(N)P_{l'}(Y)) & P_b(N,N) - P_{l'}(N) P_{l''}(N) \\ \hline 
\end{array}
\end{align*}
\fi 
Since the agents are exchangeable, $P_b$ and $\Delta_b$ are symmetric. 
Moreover, Lemma~\ref{lem:delta-zero} 
proves that the sum of all the entries of $\Delta_b$ is zero. 
This implies that $\Delta_b(Y,Y) + \Delta_b(N,N) = - 2\Delta_b(Y,N)$ and $\Delta_b(Y,Y) > 0$ and 
$\Delta_b(N,N) > 0 \ \Rightarrow \ \Delta_b(Y,N) < 0$. Therefore, $\Delta_b$ is categorical for all $b$ iff 
$\{\Delta_b\}_{b=1}^m$ satisfy Condition 1. 
\fi 

\subsection{Proof of Theorem \ref{thm:detail-free-cah}}
\label{app:detail-free-cah}
We will write $\E{T,F,G}$ to denote the average expected score under strategies $F$ and $G$ when using the score matrix $T = \{T_b\}_{b=1}^m$. 
Suppose $S = \{S_b\}_{b=1}^m$ is the true scoring matrix and $\hat{S}=\{\hat{S}_b\}_{b=1}^m$ is the scoring matrix estimated from the data. Then
\begin{align}
\E{\hat{S},F,G} = \frac{1}{m}\sum_{b=1}^m \sum_{i,j} \Delta_b(i,j) \hat{S}_b(F_i,G_j) \leq \frac{1}{m} \sum_{b=1}^m \sum_{i,j : \Delta_b(i,j) > 0} \Delta_b(i,j) = \E{S,\eye,\eye}
\end{align}

Therefore, in order to show  $\E{\hat{S}, \eye, \eye} \geq \E{\hat{S},F,G} - \varepsilon$ it is enough to show that $\E{\hat{S}, \eye, \eye} \geq \E{{S},\eye,\eye} - \varepsilon$. Now
\begin{align}
&\abs{\frac{1}{m}\E{\hat{S},\eye,\eye} - \frac{1}{m}\E{S,\eye,\eye}} \nonumber \\
&= \abs{\frac{1}{m}\sum_{b=1}^m \sum_{i,j} \Delta_b(i,j) \left( \hat{S}_b(i,j) - S_b(i,j)\right)} 
= \abs{\frac{1}{m}\sum_{b=1}^m \sum_{i,j} \Delta_b(i,j) \left( \text{sign}\left(\hat{\Delta}_b(i,j)\right)  - \text{sign}\left( \Delta_b(i,j)\right)\right)} \nonumber \\
&\leq \frac{1}{m} \sum_{b=1}^m \sum_{i,j} \abs{\Delta_b(i,j) \left( \text{sign}\left(\hat{\Delta}_b(i,j)\right)  - \text{sign}\left( \Delta_b(i,j)\right)\right)} \leq \frac{1}{m} \sum_{b=1}^m \sum_{i,j} \abs{\hat{\Delta}_b(i,j) - \Delta_b(i,j)} \nonumber \\
&= \frac{1}{m} \sum_{b=1}^m \sum_{i,j} \abs{P_b(i,j) - T_b(i,j) - \frac{1}{(m-1)(m-2)} \sum_{\stackrel{t',t'' \in [m] \setminus \set{b}}{\& \ t' \neq t''}} \left( P_{t'}(i)P_{t''}(j) - T_{t'}(i)T_{t''}(j) \right) } \nonumber \\
& \leq \frac{1}{m} \sum_{b=1}^m \sum_{i,j} \abs{P_b(i,j) - T_b(i,j)} + \frac{1}{(m-1)(m-2)} \sum_{\stackrel{t',t'' \in [m] \setminus \set{b}}{\& \ t' \neq t''}} \abs{P_{t'}(i)P_{t''}(j) - T_{t'}(i)T_{t''}(j)} \nonumber \\
&= \frac{1}{m} \sum_{b=1}^m \sum_{i,j} \abs{P_b(i,j) - T_b(i,j)} +  \frac{1}{m(m-1)(m-2)} \sum_{b=1}^m \sum_{i,j} \sum_{\stackrel{t',t'' \in [m] \setminus \set{b}}{\& \ t' \neq t''}} \left|P_{t'}(i)\left( P_{t''}(j) - T_{t''}(j)\right)  \right. \nonumber \\
&+ \left. T_{t''}(j)\left( P_{t'}(i) - T_{t'}(i)\right) \right. \nonumber \\
&\leq \frac{1}{m}\sum_{b=1}^m \sum_{i,j} \abs{P_b(i,j) - T_b(i,j)} \\
&+ \frac{1}{m(m-1)(m-2)} \sum_{b=1}^m  \sum_{\stackrel{t',t'' \in [m] \setminus \set{b}}{\& \ t' \neq t''}} \left\{ \sum_j\abs{P_{t''}(j) - T_{t''}(j)}\sum_i P_{t'}(i) \right. 
+ \left.\sum_i \abs{P_{t'}(i) - T_{t'}(i)} \sum_j T_{t''}(j) \right\} \nonumber \\
&= \frac{1}{m}\sum_{b=1}^m \sum_{i,j} \abs{P_b(i,j) - T_b(i,j)}  \\
&+\frac{1}{m(m-1)(m-2)} \sum_{b=1}^m \sum_{\stackrel{t',t'' \in [m] \setminus \set{b}}{\& \ t' \neq t''}} \left\{ \sum_j\abs{P_{t''}(j) - T_{t''}(j)} 
+ \sum_i \abs{P_{t'}(i) - T_{t'}(i)}\right\}\label{eq:bound-partial}
\end{align}

Now if we have $\bigo{\frac{n^2}{\varepsilon^2}
	\log\left( \frac{m}{\delta}\right)}$ samples from
each joint distribution $P_b$ (where $n$ is the number
of signals) and $\bigo{\frac{n}{\varepsilon^2}
	\log\left( \frac{m}{\delta}\right)}$ from each
marginal distribution $P_b$, we can ensure that with
probability at least $1-\delta$, for all
$b=1,2,\ldots,m$ the following results hold (see
\cite{devroye2012combinatorial} for a proof)
\begin{equation}\label{eq:devroye}
\sum_{i,j} \abs{P_b(i,j) - T_b(i,j)} \leq \frac{\varepsilon}{3} \ \text{and}\ \sum_i \abs{P_b(i) - T_b(i)} \leq \frac{\varepsilon}{3}.
\end{equation}

Note: If we just had $O(n/\varepsilon^2 \log(1/\delta))$ samples for each
task, then we can guarantee~\eqref{eq:devroye}
for each task separately
with probability at least $1 - \delta$. By the union bound, this 
would give a success
probability of $1 - m \delta$ over all tasks. So in order to have
a $1 - \delta$
confidence bound, we need a $\log(m/\delta)$ factor in
the sample complexity.
% Essentially I am
%using Devroye and Lugosi result with \delta = \delta / m.
%
Substituting the bounds from \cref{eq:devroye} in \cref{eq:bound-partial} and simplifying gives us $\abs{\E{\hat{S},\eye,\eye} - \E{S,\eye,\eye}} \leq \varepsilon$. Since there are $q$ agents providing reviews for each task, we get $q^2$ samples from each joint distribution and $q$ samples 
from each marginal distribution. So as long as $q =\bigom{ \frac{n}{\varepsilon^2} \log \left( \frac{m}{\delta}\right)}$ we have enough number of
samples and we are done.

\if 0
\subsection{Proof of Theorem \ref{thm:detail-free-ccah}}
\label{app:detail-free-ccah}

We will write $\E{T,F,G}$ to denote the average expected score under strategies $F$ and $G$ when using the score matrix $T = \{T_b\}_{b=1}^m$. 
Suppose $S = \{S_b\}_{b=1}^m$ is the true scoring matrix and $\hat{S}=\{\hat{S}_b\}_{b=1}^m$ is the scoring matrix estimated from the data. Then
\begin{align}
\E{\hat{S},F,G} = \frac{1}{m}\sum_{b=1}^m \sum_{i,j} \Delta_b(i,j) \hat{S}_b(F_i,G_j) \leq \frac{1}{m} \sum_{b=1}^m \sum_{i,j : \Delta_b(i,j) > 0} \Delta_b(i,j) = \E{S,\eye,\eye}
\end{align}

Therefore, in order to show  $\E{\hat{S}, \eye, \eye} \geq \E{\hat{S},F,G} - \varepsilon$ it is enough to show that $\E{\hat{S}, \eye, \eye} \geq \E{{S},\eye,\eye} - \varepsilon$. Now
\begin{align}
&\abs{\frac{1}{m}\E{\hat{S},\eye,\eye} - \frac{1}{m}\E{S,\eye,\eye}} \nonumber \\
&= \abs{\frac{1}{m}\sum_{b=1}^m \sum_{i,j} \Delta_b(i,j) \left( \hat{S}_b(i,j) - S_b(i,j)\right)} 
= \abs{\frac{1}{m}\sum_{b=1}^m \sum_{i,j} \Delta_b(i,j) \left( \text{sign}\left(\hat{\Delta}_b(i,j)\right)  - \text{sign}\left( \Delta_b(i,j)\right)\right)} \nonumber \\
&\leq \frac{1}{m} \sum_{b=1}^m \sum_{i,j} \abs{\Delta_b(i,j) \left( \text{sign}\left(\hat{\Delta}_b(i,j)\right)  - \text{sign}\left( \Delta_b(i,j)\right)\right)} \leq \frac{1}{m} \sum_{b=1}^m \sum_{i,j} \abs{\hat{\Delta}_b(i,j) - \Delta_b(i,j)} \nonumber \\
&= \frac{1}{m} \sum_{b=1}^m \sum_{i,j} \abs{P_b(i,j) - T_b(i,j) - \frac{1}{(m-1)(m-2)} \sum_{\stackrel{t',t'' \in [m] \setminus \set{b}}{\& \ t' \neq t''}} \left( P_{t',t''}(i,j) - T_{t',t''}(i,j) \right) } \nonumber \\
& \leq \frac{1}{m} \sum_{b=1}^m \sum_{i,j} \abs{P_b(i,j) - T_b(i,j)} + \frac{1}{(m-1)(m-2)} \sum_{\stackrel{t',t'' \in [m] \setminus \set{b}}{\& \ t' \neq t''}} \abs{P_{t',t''}(i,j) - T_{t',t''}(i,j)} \label{eq:bound-partial1} \\
\end{align}

Now if we have $\bigo{\frac{n^2}{\varepsilon^2}
	\log\left( \frac{m^2}{\delta}\right)}$ samples from
each joint distribution $P_b$ (where $n$ is the number
of signals) and from each pairwise distribution $P_{t',t''}$, we can ensure that with
probability at least $1-\delta$, for all
$b=1,2,\ldots,m$ and $t' \neq t''$, the following results hold (see
\cite{devroye2012combinatorial} for a proof)
\begin{equation}\label{eq:devroye1}
\sum_{i,j} \abs{P_b(i,j) - T_b(i,j)} \leq \frac{\varepsilon}{2} \ \text{and}\ \sum_{i,j} \abs{P_{t',t''}(i,j) - T_{t',t''}(i,j)} \leq \frac{\varepsilon}{2}.
\end{equation}
Note: If we just had $O(n^2/\varepsilon^2 \log(1/\delta))$ samples from each
task $b$ (or each pairwise distribution), then we can guarantee~\eqref{eq:devroye1}
for each task $b$ (or each pairwise distribution) separately
with probability at least $1 - \delta$. Since there are $m$ tasks in total and $m^2 - m$ 
pairwise distributions $P_{t',t''}(\cdot,\cdot)$, this 
gives a success
probability of $1 - m^2 \delta$ by the union bound. So in order to have
a $1 - \delta$
confidence bound, we need a $\log(m^2/\delta)$ factor in
the sample complexity bound.
% Essentially I am
%using Devroye and Lugosi result with \delta = \delta / m.
%
Substituting the bounds from \cref{eq:devroye1} in \cref{eq:bound-partial1} and simplifying gives us $\abs{\E{\hat{S},\eye,\eye} - \E{S,\eye,\eye}} \leq \varepsilon$. Since there are $q$ agents providing reviews for each task, we get $q^2$ samples from each joint distribution and $q^2$ samples 
from each pairwise distribution. So as long as $q =\bigom{ \frac{n}{\varepsilon^2} \log \left( \frac{m}{\delta}\right)}$ we have enough number of
samples and we are done.
\fi 

\if 0
\subsection{Proof of Theorem \ref{thm:asymmetric}}
\label{app:uniformization}
Let $K$ be the random variable corresponding to the bonus task which is selected uniformly at random from the set of $m$ tasks. Once we have chosen
$K$ let $K',K''$ be the pair of penalty tasks selected uniformly at random from the set of tasks $[m] \setminus \set{K}$ such that $K' \neq K''$ (we
are abusing notation since ideally we should write $K'(K)$ and $K''(K)$). We will write $\sD_K$ to denote the distribution of the pair of penalty tasks $(K',K'')$
for the given bonus task $K$. The expected score under CAH is

\begin{align*}
&\Em{K}{ \sum_{r_1,r_2} \Pro{R^K_1 = r_1, R^K_2 = r_2} S_K(r_1,r_2) - \Em{(K',K'') \sim \sD_K}{\sum_{r_1,r_2} \Pro{R^{K'}_1 = r_1, R^{K''}_2 = r_2} S_K(r_1,r_2)} } \\
&= \frac{1}{m} \sum_{k=1}^m \sum_{r_1,r_2} \Pro{R^k_1 = r_1, R^k_2 = r_2} S_k(r_1,r_2)  - \sum_{r_1,r_2} \Em{K}{\Em{(K',K'') \sim \sD_K}{\Pro{R^{K'}_1 = r_1, R^{K''}_2 = r_2}} S_K(r_1,r_2)}
\end{align*}
Here $R^K_1$ (resp. $R^K_2$) refers respectively to the signal reported by agent $1$ (resp. $2$) on task $K$. Now,
consider the symmetric strategy \[\bar{F}_{ir_1} = \sum_{k=1}^m \frac{P_k(i)}{\sum_{k'} P_{k'}(i)} F^k_{ir_1}\] 
The proof proceeds by showing that adopting  this uniform strategy instead of a non-uniform strategy 
gives a loss which asymptotically goes to zero with the number of tasks. First, we consider the loss in utility due to the reward obtained
from the penalty tasks.
%
%\begin{lemma}
%	\[
%	\abs{ \sum_{r_1,r_2} \Em{K}{\Em{(K',K'') \sim \sD_K}{\Pro{R^{K'}_1 = r_1, R^{K''}_2 = r_2 | F,G}}  - \Pro{R^{K'}_1 = r_1, R^{K''}_2 = r_2 | \bar{F}, G} S_K(r_1,r_2)}} = \bigo{\frac{n^2}{m}}
%	\]
%\end{lemma}
%\begin{proof}
	Fix a bonus task $K$.
	\begin{align}
	&\Em{(K',K'') \sim \sD_K}{\Pro{R^{K'}_1 = r_1, R^{K''}_2 = r_2 | F,G}} = \frac{1}{(m-1)(m-2)} \sum_{(k',k'')}\Pro{R^{k'}_1 = r_1, R^{k''}_2 = r_2 | F,G} \\
	&= \frac{1}{(m-1)(m-2)} \sum_{(k',k'')} \sum_{ij} P_{k'}(i)P_{k''}(j) F^{k'}_{ir_1} G^{k''}_{jr_2} \\
	&= \frac{1}{(m-1)(m-2)} \left\{  \sum_{k' \neq k} \sum_{k'' \neq k} \sum_{ij} P_{k'}(i)P_{k''}(j) F^{k'}_{ir_1} G^{k''}_{jr_2} -  \sum_{k' \neq k}\sum_{ij}P_{k'}(i) P_{k'}(j) F^{k'}_{ir_1} G^{k'}_{jr_2} \right\} \\
	&= \frac{1}{(m-1)(m-2)} \left\{ \sum_{k'} \sum_{k''} \sum_{ij} P_{k'}(i)P_{k''}(j) F^{k'}_{ir_1} G^{k''}_{jr_2} - \sum_{k' \neq k}\sum_{ij} P_{k'}(i) P_{k'}(j) F^{k'}_{ir_1} G^{k'}_{jr_2}\right.\\
	& \left. - \sum_{k'' \neq k} \sum_{ij} P_{k}(i) P_{k''}(j)F^{k}_{ir_1}G^{k''}_{jr_2} - \sum_{k' \neq k} \sum_{ij} P_{k'}(i) P_{k}(j)F^{k'}_{ir_1}G^{k}_{jr_2} - \sum_{ij} P_k(i)P_k(j)F^k_{ir_1}G^k_{jr_2} \right\}
	\end{align}
	\begin{align}
	&\abs{ \Em{(K',K'') \sim \sD_K}{\Pro{R^{K'}_1 = r_1, R^{K''}_2 = r_2 | F,G} - \Pro{R^{K'}_1 = r_1, R^{K''}_2 = r_2 | \bar{F},G}} }\\
	&= \left \lvert \frac{1}{(m-1)(m-2)} \left\{ \sum_{k'} \sum_{k''} \sum_{ij} P_{k'}(i)P_{k''}(j) (F^{k'}_{ir_1} - \bar{F}_{ir_1}) G^{k''}_{jr_2} - \sum_{k' \neq k} \sum_{ij} P_{k'}(i) P_{k'}(j) (F^{k'}_{ir_1} - \bar{F}_{ir_1}) G^{k'}_{jr_2}\right. \right. \\
	& \left. \left. - \sum_{k'' \neq k} \sum_{ij} P_{k}(i) P_{k''}(j)(F^{k}_{ir_1} - \bar{F}_{ir_1}) G^{k''}_{jr_2} - \sum_{k' \neq k} \sum_{ij} P_{k'}(i) P_{k}(j) (F^{k'}_{ir_1} - \bar{F}_{ir_1}) G^{k}_{jr_2} - \sum_{ij} P_k(i)P_k(j)(F^k_{ir_1} - \bar{F}_{ir_1}) G^k_{jr_2} \right\} \right \rvert \\
	&= \frac{1}{(m-1)(m-2)}\left\lvert \sum_{k'} \sum_{k''} \sum_{ij} P_{k'}(i)P_{k''}(j) (F^{k'}_{ir_1} - \bar{F}_{ir_1}) G^{k''}_{jr_2} \right \rvert + \bigo{\frac{1}{m}} \\
	&= \frac{1}{(m-1)(m-2)} \left \lvert \sum_{k''} \sum_j P^{k''}(j) G^{k''}_{jr_2} \sum_{i} \sum_{k'} P_{k'}(i) (F^{k'}_{ir_1} - \bar{F}_{ir_1})\right  \rvert + \bigo{\frac{1}{m}} \\
	&= \bigo{\frac{1}{m}}
	\end{align} 
	The last line follows since $\sum_k P_k(i) F^k_{ir_1} = \sum_k P_k(i) \bar{F}_{ir_1}$. This proves that the loss in utility from the penalty tasks 
	from adopting a symmetric strategy is $\bigo{n^2/m}$ since (1) $S_K(r_1,r_2) \in \set{0,1}$ and (2) the outer sum is over all pairs of signals $r_1$,$r_2$ and there are $n^2$ of them. 
%\end{proof}
Now we consider the loss in utility due to the reward from the bonus tasks.
\begin{align*}
\text{Let }&T_1(F,G) = \frac{1}{m} \sum_{k=1}^m \sum_{r_1,r_2} \Pro{R^k_1 = r_1, R^k_2 = r_2} S_k(r_1,r_2) = \frac{1}{m} \sum_{k=1}^m \sum_{r_1,r_2} \sum_{ij} P_k(i,j) F^k_{ir_1} G^k_{jr_2} S_k(r_1,r_2)\\
\end{align*}
Therefore, the loss in utility from the reward from the bonus tasks is 
\begin{align*}
&\abs{T_1(F,G) - T_1(\bar{F},G)} = \frac{1}{m} \abs{\sum_{k=1}^m \sum_{r_1,r_2} \left\{ \Pro{R^k_1 = r_1, R^k_2 = r_2 | F,G} - \Pro{R_1^k = r_1, R_2^k = r_2 | \bar{F},G}\right\} S_k(r_1,r_2)} \\
&\leq \frac{1}{m} \sum_{k=1}^m \sum_{r_1,r_2} \abs{\Pro{R^k_1 = r_1, R^k_2 = r_2 | F,G} - \Pro{R_1^k = r_1, R_2^k = r_2 | \bar{F},G}} \abs{S_k(r_1,r_2)}\\
%	&\leq \frac{1}{m} \left\{ \sum_{k=1}^m \sum_{r_1,r_2} \left\{ \Pro{R^k_1 = r_1, R^k_2 = r_2 | F,G} - \Pro{R_1^k = r_1, R_2^k = r_2 | \bar{F},G}\right\}^2 \right\}^{1/2} \left\{ \sum_{k=1}^m \sum_{r_1,r_2} S_k^2(r_1,r_2) \right\}^{1/2}
&\leq \frac{1}{m} \sum_{k=1}^m \sum_{r_1,r_2} \abs{\Pro{R^k_1 = r_1, R^k_2 = r_2 | F,G} - \Pro{R_1^k = r_1, R_2^k = r_2 | \bar{F},G}} \\
&= \frac{1}{m} \sum_{k=1}^m \sum_{r_1,r_2} \abs{\sum_{ij} P_k(i,j) F^k_{ir_1} G^k_{jr_2} - \sum_{ij} P_k(i,j) \bar{F}_{ir_1} G^k_{jr_2}} \\
&= \frac{1}{m} \sum_{k=1}^m \sum_{r_1,r_2} \sum_j \abs{G^k_{jr_2} \sum_i P_k(i,j) \left( F^k_{ir_1} - \bar{F}_{ir_1}\right) } \\
&= \frac{1}{m} \sum_{k=1}^m \sum_{r_1,r_2} \sum_j \abs{G^k_{jr_2} \sum_i P_k(i,j) \left( F^k_{ir_1} - \sum_{k'=1}^m \frac{P_{k'}(i)}{\sum_{k''}P_{k''}(i)} F^{k'}_{ir_1}\right) } \\
&= \frac{1}{m} \sum_{k=1}^m \sum_{r_1,r_2} \sum_j \abs{G^k_{jr_2} \sum_i P_k(i,j)  \left( \frac{\sum_{k'}P_{k'}(i) (F^k_{ir_1} - F^{k'}_{ir_1})}{\sum_{k''}P_{k''}(i)} \right) }\\
&\leq \frac{1}{m} \sum_{k=1}^m \sum_{r_1,r_2} \sum_j G^k_{jr_2} \sum_i P_k(i,j) \frac{\sum_{k'}P_{k'}(i) (\abs{F^k_{ir_1} } + \abs{ F^{k'}_{ir_1}  })}{\sum_{k''}P_{k''}(i)} \\
&\leq \frac{2}{m} \sum_{k=1}^m \sum_{r_1,r_2} \sum_{i,j} P_k(i,j) = \frac{2n^2}{m}
\end{align*}
\fi 
\subsection{Additional Plots}
\label{app:uni-plots}
\begin{figure*}[t!]
	\minipage{0.24\textwidth}
	\includegraphics[width=\linewidth]{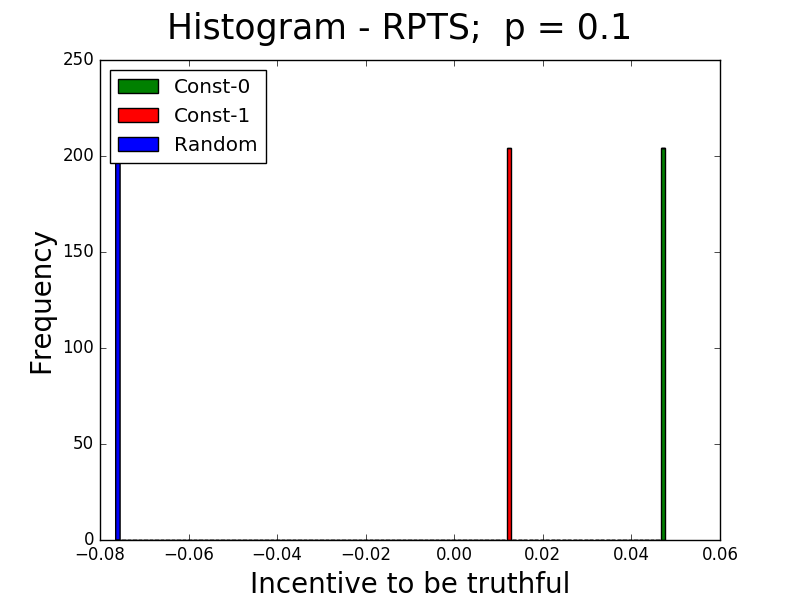}
	\endminipage\hfill
	\minipage{0.24\textwidth}
	\includegraphics[width=\linewidth]{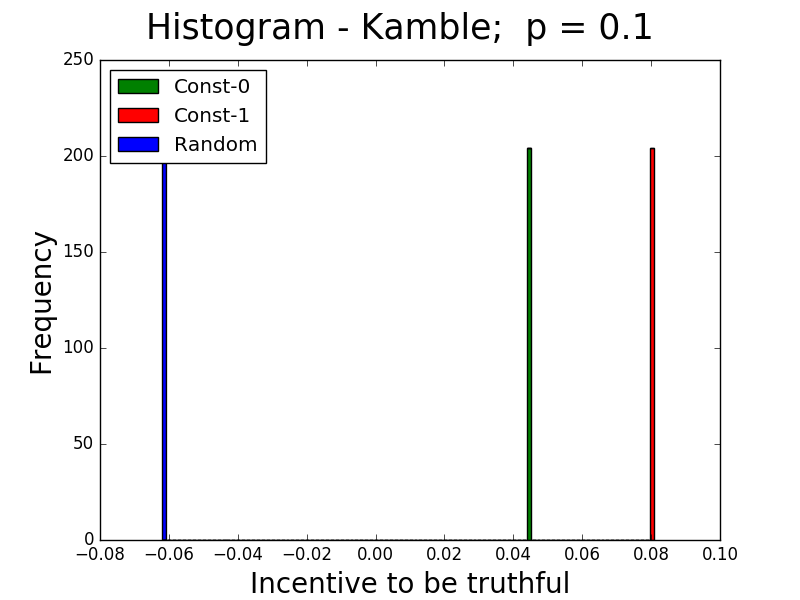}
	\endminipage\hfill
	\minipage{0.24\textwidth}%
	\includegraphics[width=\linewidth]{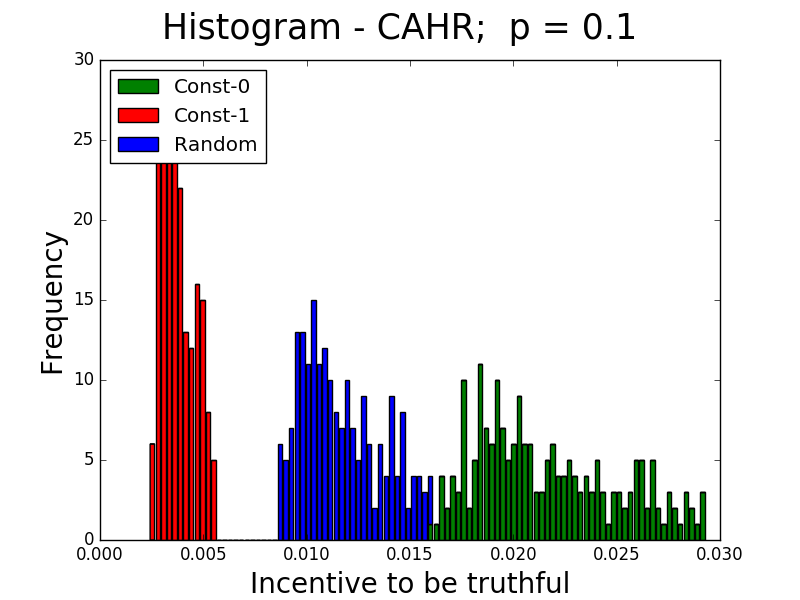}
	\endminipage\hfill
	\minipage{0.24\textwidth}
	\includegraphics[width=\linewidth]{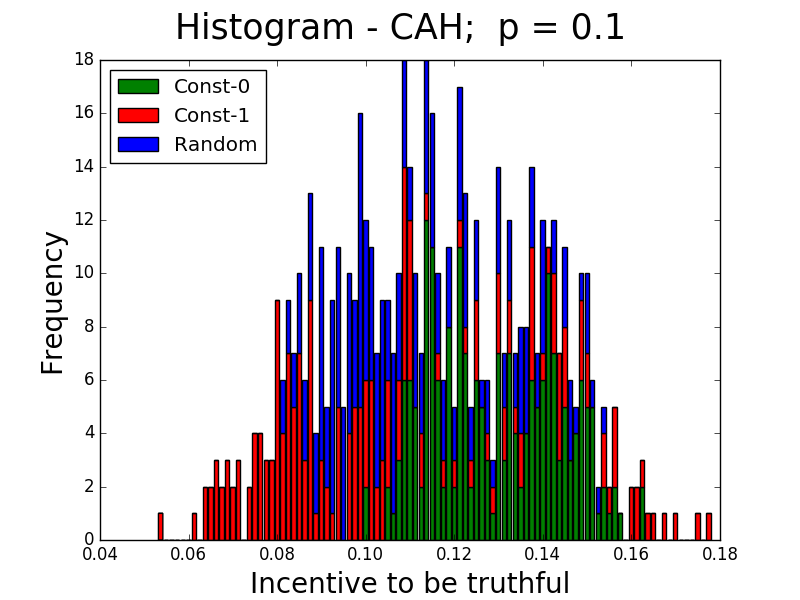}
	\endminipage\hfill
	\newline 
	\minipage{0.24\textwidth}
	\includegraphics[width=\linewidth]{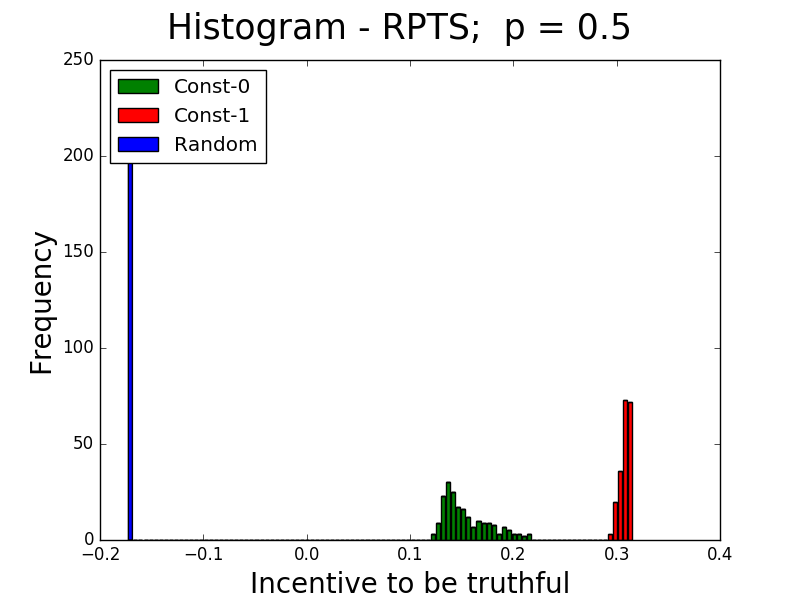}
	\endminipage\hfill
	\minipage{0.24\textwidth}
	\includegraphics[width=\linewidth]{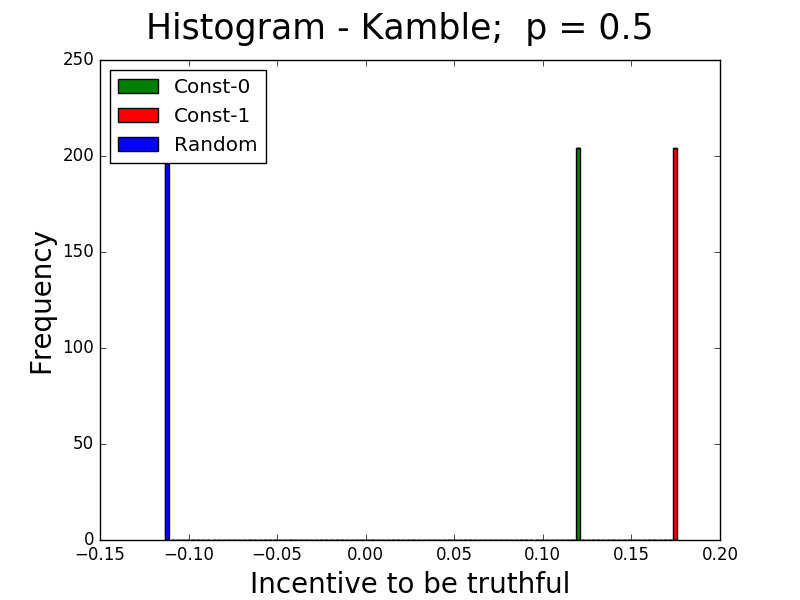}
	\endminipage\hfill
	\minipage{0.24\textwidth}%
	\includegraphics[width=\linewidth]{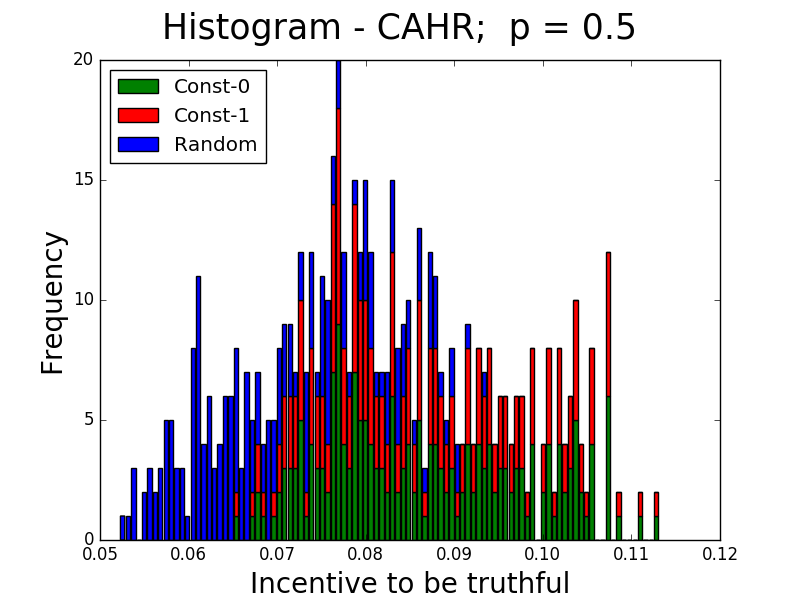}
	\endminipage\hfill
	\minipage{0.24\textwidth}
	\includegraphics[width=\linewidth]{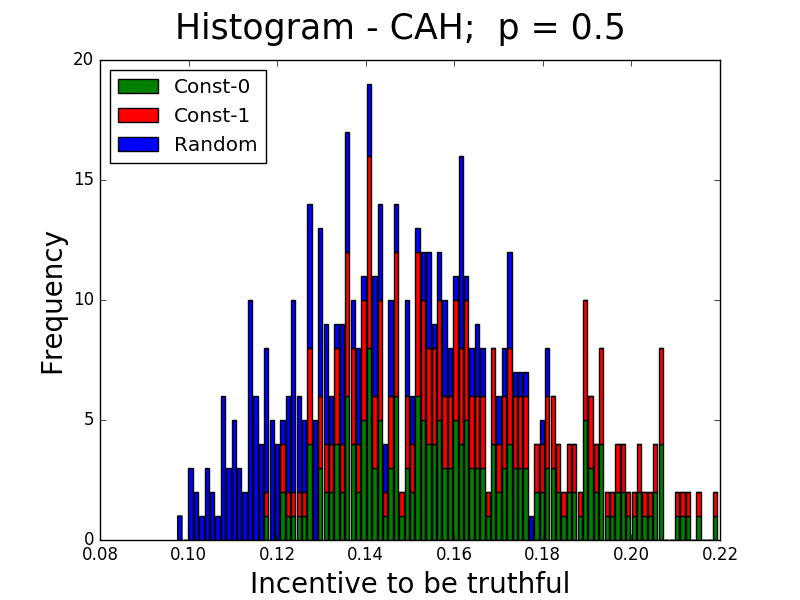}
	\endminipage\hfill
	\newline 
	\minipage{0.24\textwidth}
	\includegraphics[width=\linewidth]{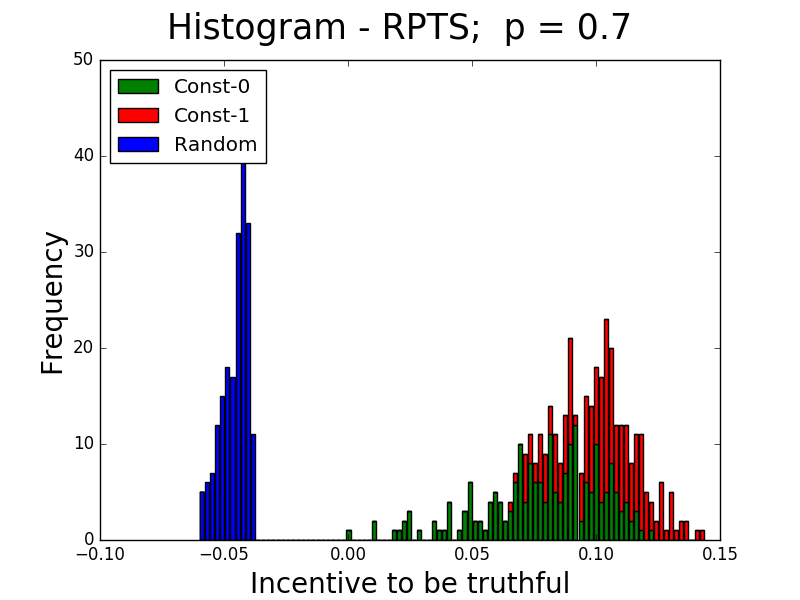}
	\endminipage\hfill
	\minipage{0.24\textwidth}
	\includegraphics[width=\linewidth]{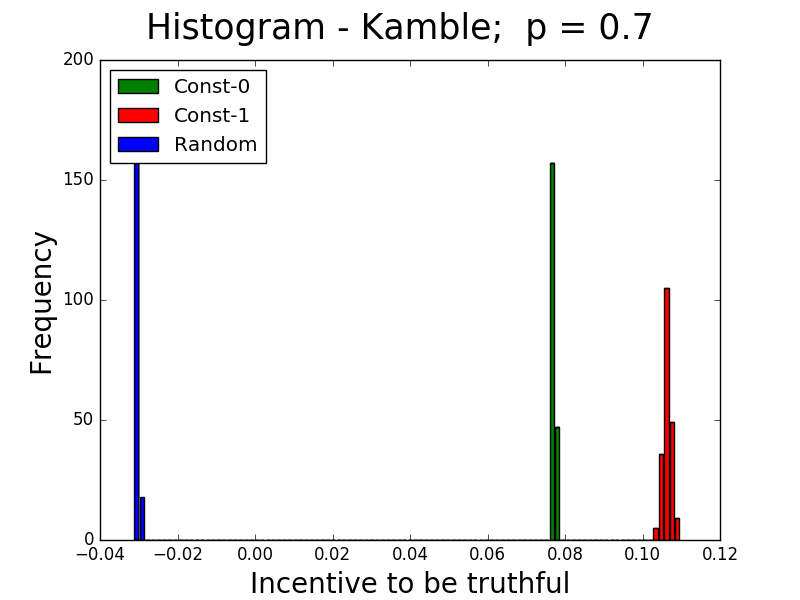}
	\endminipage\hfill
	\minipage{0.24\textwidth}%
	\includegraphics[width=\linewidth]{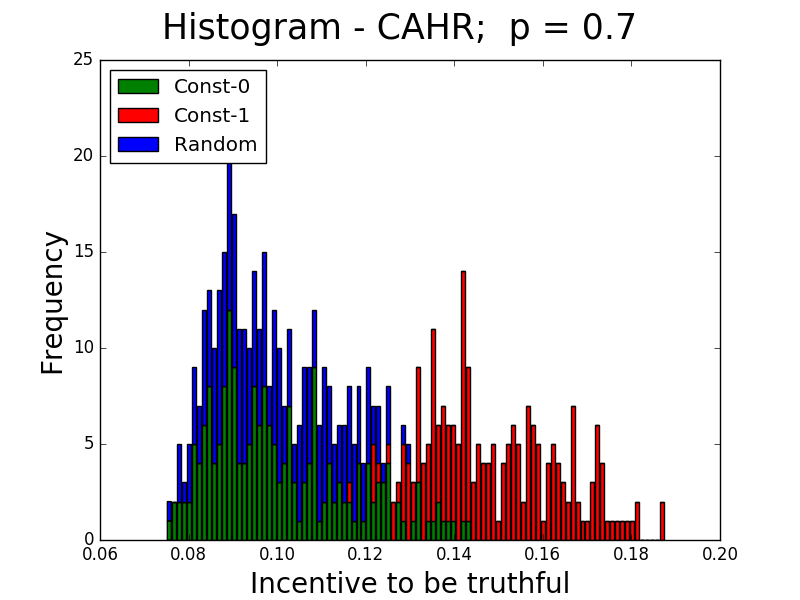}
	\endminipage\hfill
	\minipage{0.24\textwidth}
	\includegraphics[width=\linewidth]{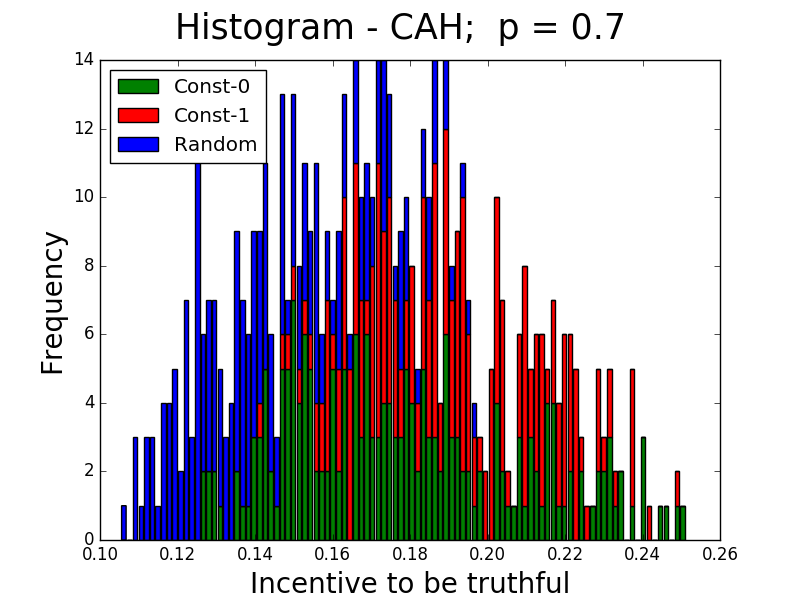}
	\endminipage\hfill
	\minipage{0.24\textwidth}
	\includegraphics[width=\linewidth]{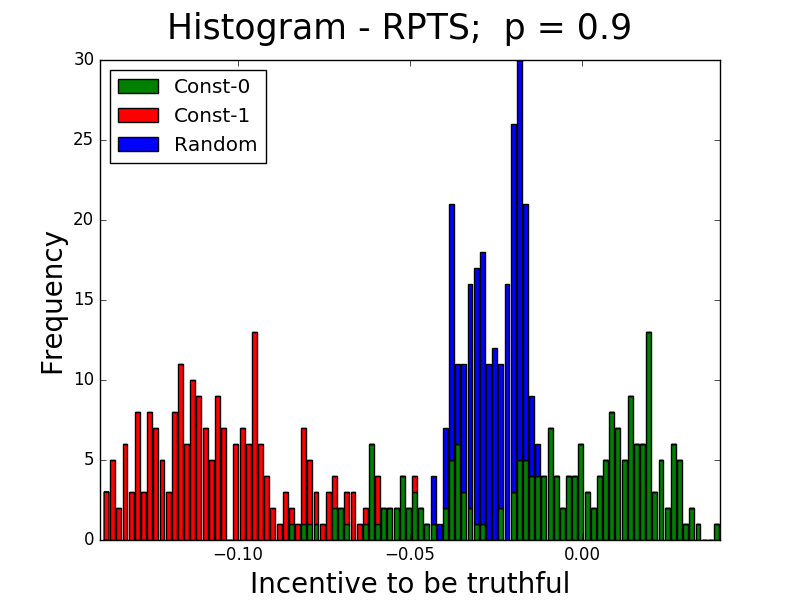}
	\endminipage\hfill
	\minipage{0.24\textwidth}
	\includegraphics[width=\linewidth]{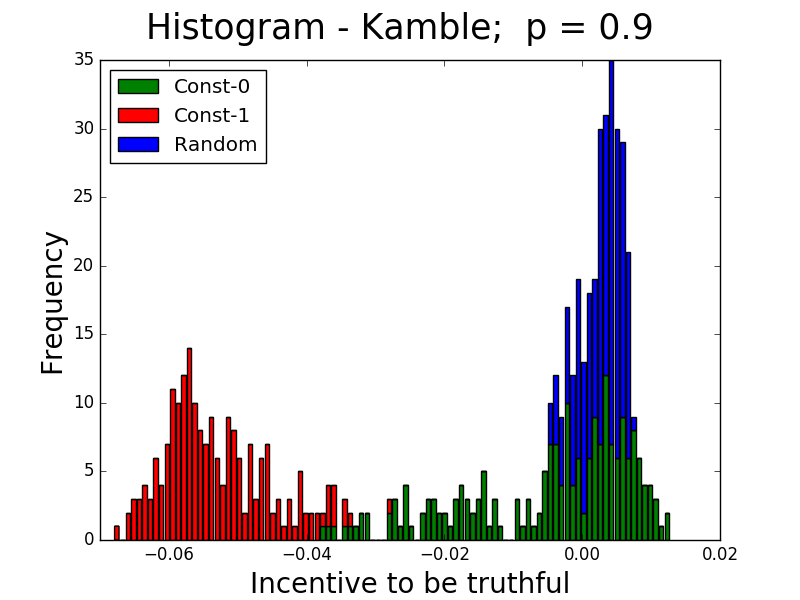}
	\endminipage\hfill
	\minipage{0.24\textwidth}%
	\includegraphics[width=\linewidth]{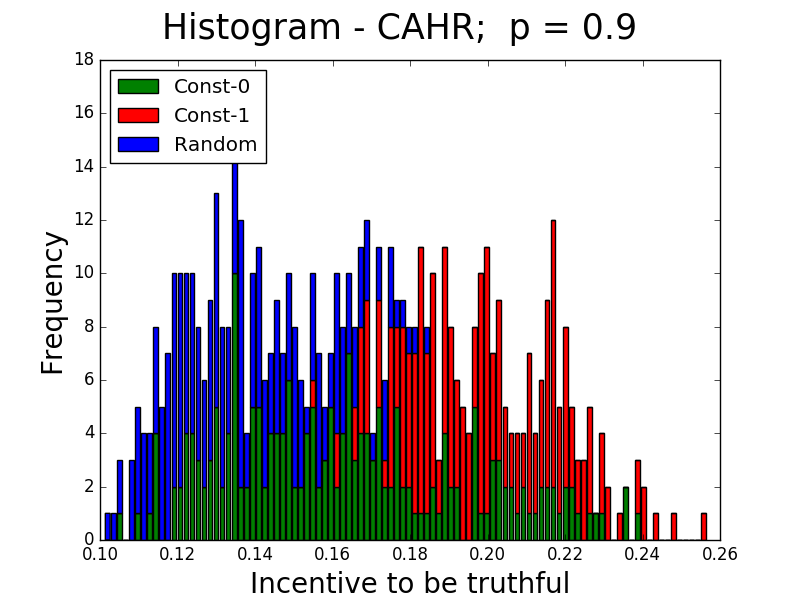}
	\endminipage\hfill
	\minipage{0.24\textwidth}
	\includegraphics[width=\linewidth]{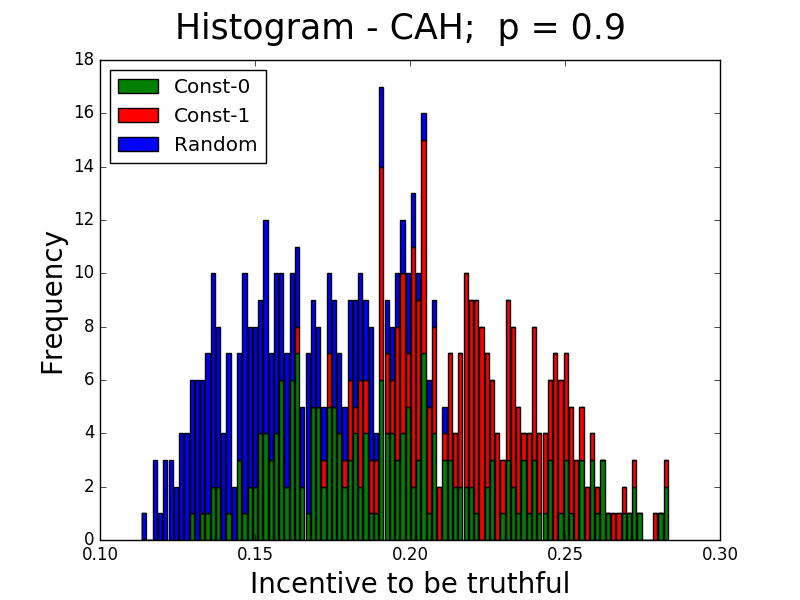}
	\endminipage\hfill
	\newline 
	\caption{Histograms for the 204
		(region, business type) pairs of
		expected benefit
		(averaged across questions) from truthful behavior
		vs.  some other strategy, when fraction $p$
		is truthful and fraction $1-p$ adopt the same,
		non-truthful
		strategy for $p = 0.1,0.5,0.7,0.9$  \label{fig:1}}
\end{figure*}
\end{document}